\newtheorem{theorem}{Theorem}[section]
\newtheorem{corollary}[theorem]{Corollary}
\newtheorem{lemma}[theorem]{Lemma}
\newtheorem{proposition}[theorem]{Proposition}
\theoremstyle{remark}
\newtheorem{remark}[theorem]{Remark}
\theoremstyle{definition}
\newtheorem{definition}[theorem]{Definition}
\providecommand{\tightlist}{%
  \setlength{\itemsep}{0pt}\setlength{\parskip}{0pt}}
\DeclareMathOperator{\Stab}{Stab}
\DeclareMathOperator{\PGL}{PGL}
\DeclareMathOperator{\PU}{PU}
\DeclareMathOperator{\GL}{GL}
\DeclareMathOperator{\diag}{diag}
\DeclareMathOperator{\U}{U}
\DeclareMathOperator{\M}{M}
\newcommand{\dual}[1]{{#1}^{\sharp}}
\title{Buildings for Synthesis with Clifford+R}
\author{
Mark Deaconu
\institute{Institute for Quantum Computing\\
University of Waterloo}
\and
Nihar Gargava
\institute{Institut de Mathématiques d’Orsay\\
Université Paris-Saclay}
\and
Amolak Ratan Kalra
\institute{Institute for Quantum Computing\\
David R. Cheriton School of Computer Science\\
University of Waterloo\\
Perimeter Institute for Theoretical Physics\\
}
\and
Michele Mosca
\institute{Institute for Quantum Computing\\
Dept. of Combinatorics and Optimization\\
University of Waterloo\\
Perimeter Institute for Theoretical Physics\\
}
\and\and
Jon Yard
\institute{QuScript Inc.\\}
}
\begin{document}
\maketitle

\begin{abstract}
We study the problem of exact synthesis for the Clifford$+\mathsf{R}$ gate set and give the explicit structure of the underlying Bruhat-Tits building for this group. In this process, we also give an alternative proof of the arithmetic nature of this gate set.
\end{abstract}

\section{General synthesis problem} 

Let $\PU(d)$ be the group of unitary matrices up to phases and let $d(\cdot,\cdot)$ be some natural distance function on $\PU(d)$.
The general circuit synthesis problem can be stated as follows: Given a target unitary matrix $U \in \PU(d)$ and a universal gate set $G = \{g_1, \cdots, g_n \} \subseteq \PU(d)$, output a circuit consisting of words over $G$ that realizes $U$. A good synthesis algorithm is one in which, given a target error $\varepsilon > 0$ and $U\in \PU(d)$, the algorithm outputs a circuit $\gamma  = g_{i_1} g_{i_2} \cdots g_{i_r}$ for some $r$ that is at most a polynomial in $\log \tfrac{1}{\varepsilon}$.

In the special case where an exact decomposition of the unitary $U$ over $G$ is possible, the problem is called exact synthesis. One method to address this problem is to give an explicit algebraic characterization of those unitaries that are exactly implementable using $G$. This boils down to proving that the group generated by $G$ consists of matrices having entries in a particular ring $R \subseteq \mathbb{C}$. For example, \cite{kmm} shows an exact synthesis characterization for the single qubit Clifford$+T$ gate set. To do so, they show that the group generated by this gate set is $PU_{2}( \smash{ \mathbb{Z}[\frac{1}{\sqrt{2}}},i])$. This approach of giving an algebraic characterization for the group generated by a gate set $G$, can also be used to give more efficient algorithms to solve the general circuit synthesis problem (approximate synthesis) via a ``rounding-off" procedure, see \cite{ross2014optimal, ross2,yardapprox,Kliuchnikov2023shorterquantum}.
\subsection{Our work}

In this work, we will focus on the exact synthesis problem for single qutrits ($d = 3$). Previously, algebraic characterizations similar to the qubit case have been shown for certain qutrit gate sets, see \cite{Kalra2025synthesisarithmetic, evra2024arithmeticitycoveringrate9cyclotomic,kalra2024multiqutritexactsynthesis,Glaudell_2024,prakash2018normal}. In particular, \cite{evra2024arithmeticitycoveringrate9cyclotomic} studies both the exact and approximate synthesis problem for the Clifford$+\mathcal{D}$ gate set using Bruhat-Tits buildings.

Motivated by these results, and with the goal of furthering the connections between Bruhat-Tits theory and circuit synthesis, we study the problem of synthesis over the qutrit Clifford$+\mathsf{R}$ gate set. This gate set was first studied in \cite{anwar2012qutrit}, where a protocol to implement it in a fault tolerant manner via magic state distillation was proposed. Following this, the Clifford$+\mathsf{R}$ gate set has been studied in different contexts (see \cite{cui2015universal,yardapprox,Bocharov_2017,bocharov2016noteoptimalityquantumcircuits,glaudell_et_al:LIPIcs.TQC.2022.12}). Some progress was made on the problem of synthesis using this gate set soon after its proposal in \cite{vadymanyons} and more recently, in \cite{Kalra2025synthesisarithmetic,gustafson2025synthesis}.

In this paper, we give the structure of the Bruhat-Tits building associated with the Clifford$+\mathsf{R}$ gate set. This allows us to give a new proof for the arithmeticity of this gate set, which was first proven in \cite{Kalra2025synthesisarithmetic}. 

In this paper, we give an entirely different proof by constructing the Bruhat-Tits building for $U_{3}\left(\mathbb{Z}[\chi^{-1}]\right)$ and show that this building is in fact a tree. One advantage of this point-of-view is that the circuit synthesis algorithm can be visualized in terms of a path traversal on such a tree.
\subsection{Bruhat-Tits Theory and Gate Synthesis}
In number theory, the statement ``Clifford$+T$ satisfies a ring equality'' would usually be stated as ``the Clifford$+T$ gate set generates an S-arithmetic group''. Such arithmetic groups are a well studied classical object
(see \cite{borel1989finiteness, prasad1989volumes}). 
One of the main tools to understand the properties of these groups (and hence, exact synthesis) is to study their action on certain combinatorial structures (simplicial complexes) known as Bruhat-Tits buildings. This interplay of number theory and gate synthesis has been explored for various arithmetic gate sets \cite{kliuchnikov2024multiqubitcircuitsynthesishermitian, evra2024arithmeticitycoveringrate9cyclotomic, parzanchevski2018super,blackman2023fast,yardapprox}.

The other aim of this paper is to further this connection and give a relatively self-contained exposition of these mathematical tools (Bruhat-Tits theory) used to establish synthesis results of this nature. In particular, we present an exposition on the construction of the Bruhat-Tits building for $PU(3)$ while avoiding traditional definitions involving root spaces of algebraic tori inside p-adic Lie groups. In the past, Bruhat-Tits theory has been useful in showing that when finitely generated subgroups are not arithmetic, they are thin \cite{evra2024arithmeticitycoveringrate9cyclotomic}. 
Due to the lack of an arithmetic structure, thin groups are more difficult to deal with as one does not have an efficiently computable membership criterion for thin groups in general \cite{sarnak2012notes}. For the circuit synthesis problem, the implication is that if the gate set $G$ in question is thin, it does not satisfy a ring equality. This makes the problem of finding good synthesis algorithms for the gate set $G$ much harder. So in summary, thinness is undesirable and arithmeticity is desirable for finitely generated groups of quantum gates.

For the case of $S$-arithmetic unitary groups that the synthesis problem is concerned with, one is greatly aided by the use of the concrete lattice chain models of \cite{abramenko2002lattice} that distill the mathematical content of the general theory into a tractable simplicial complex. This makes the sophisticated number-theoretic machinery more amenable to explicit calculations over local rings and their finite residue fields. This way, we hope that our paper helps to bridge the gap between mathematicians and quantum computing specialists interested in this topic.
\section{Preliminaries}
Denote  $F = \mathbb{Q}(\omega):=\{a+b\omega ~|~a,b\in \mathbb{Q}\}$ where $\omega=e^{\frac{2\pi i}{3}}$ is the primitive third root of unity.
This is a number field of degree $[F:\mathbb{Q}]=2$ over $\mathbb{Q}$. The ring of integers of $F$ is $\mathcal{O}_F = \mathbb{Z}[\omega]:=\{a+b\omega~|~a,b\in \mathbb{Z}\}$. 
Here, the following is true:
\begin{equation}
  3 \mathcal{O}_F = \chi^{2} \mathcal{O}_F,
  \text{ where }\chi = 1- \omega.
\end{equation}
$\chi$ generates a principal prime ideal in $\mathcal{O}_F$ and we will denote $\pi = \chi  \mathcal{O}_{F}$. In particular, we have that the residue field $\mathcal{O}_F/\chi \mathcal{O}_F$ is $\mathbb{F}_3$. \footnote{$f: \mathcal{O}_{F} \rightarrow \mathbb{F}_3^2$, $f(a+b\omega) = (\bar{a},\bar{b})$ where $\bar{}$ denotes mod 3. $ker(f) = 3\mathcal{O}_F$. By the first isomorphism theorem, $\mathcal{O}_F/3\mathcal{O}_F \cong \mathbb{F}_3^2$ as an additive group. Since $\chi$ is a principal prime ideal $\mathcal{O}_F/\chi \mathcal{O}_F$ will be a field. It would have to have 3 elements because $\mathcal{O}_F/\chi^2 \mathcal{O}_F$ has 9 elements.}

\subsection{\texorpdfstring{The Clifford$+\mathsf{R}$ gate set}{The Clifford+R gate set}}
\begin{definition}
\label{de:hsr}
The Clifford$+\mathsf{R}$ gate set for qutrits is a group generated by the following matrices:
\[
H=\frac{i}{\sqrt{3}}\begin{pmatrix}
1 & 1 & 1\\
1 & \omega & \omega^{2}\\
1 & \omega^{2} & \omega\\
\end{pmatrix}~~~
S=\begin{pmatrix}
1 & 0 & 0\\
0 & \omega & 0\\
0 & 0 & 1\\
\end{pmatrix}~~~
R=
\begin{pmatrix}
1 & 0 & 0\\
0 & 1 & 0\\
0 & 0 & -1\\
\end{pmatrix}
\]
\end{definition}
In \cite{Kalra2025synthesisarithmetic}, the following theorem was proved:  
\begin{theorem}[Corollary 5.8 in \cite{Kalra2025synthesisarithmetic}] 
\label{th:ring_equality_p_is_3}
The group generated by the Clifford$+\mathsf{R}$ gate set is
\[
\langle H,S,R \rangle  =\U_{3}\left(\mathbb{Z}[\chi^{-1}]\right)
\]
\end{theorem}
Here the group on the right is the set of unitaries with entries in $\mathbb{Z}[\chi^{-1}]$ and $\mathbb{Z}[\tfrac{1}{3}, \omega]= \mathbb{Z}[\chi^{-1}]$.
\subsection{Local fields}

For $F$, we define $F_{\pi}$ to be the local completion of $F$ with respect to the prime ideal $ \pi = \chi \mathcal{O}_{F}$. This is defined as the metric completion of $F$ 
under the $\pi$-adic absolute value\footnote{It is also possible to define the absolute value as $|x|_{\pi} = e^{-v_{\pi}(x)}$. 
The choice presented here is used to ensure that $|3|_{\pi} = |3|_{3} = \tfrac{1}{3}$ as desired, where $|\ |_{p}$ is the usual $p$-adic norm on $\mathbb{Q}$.}
given by 
\begin{align}
	| \ |_{\pi} :F & \to \mathbb{R}_{\geq 0} , \quad
	x \mapsto |x|_{\pi} = 
	\begin{cases}
	3^{-\frac{{v_{\pi}(x)}}{2}} 
	& \text{if }  x \neq 0\\
	 0 & \text{if }x = 0
	\end{cases}
	. 
\end{align}
Here we define $v_{\pi}(x)$ for $x \in \mathcal{O}_{F} \setminus \{0\}$ as $
v_{\pi}(x) = \max\{ n \in \mathbb{Z} \mid \chi^{-n}x \in \mathcal{O}_{F}\},$
and for $x \in F \setminus \{0\}$ as
\begin{equation}
    v_{\pi}(x) = v_\pi(a) - v_{\pi}(b), \text{where }x=a/b \text{ for some }a,b \in \mathcal{O}_{F}.
\end{equation}

\begin{remark}
\label{re:sde}
The $v_{\pi}$ defined here is called the $\pi$-adic valuation in number theory. In some quantum circuit synthesis literature \cite{Kalra2025synthesisarithmetic,kmm}, this quantity is known as the smallest denominator exponent.
\end{remark}

We define $\mathcal{O}_{\pi} \subseteq F_{\pi}$ to be the local completion of $\mathcal{O}_F$ with respect to the prime ideal $ \pi = \chi \mathcal{O}_F$. It can be identified as 
\begin{equation}
  \mathcal{O}_{\pi} = \{x \in F_{\pi} \mid |x|_{\pi} \leq 1 \}.
\end{equation}

\begin{lemma}
  \label{le:padic_expansion}
  Any element $x \in \mathcal{O}_{\pi}$ can be written as 
  $
      x = x_{0}  + x_{1}  \chi + x_{2} \chi^{2} \dots,
  $
  where $\{x_{i}\}_{i \geq 0} \in \{0,1,2\}$. Furthermore, every element of $F_{\pi}$ can be written as $\chi^{i} x$ for some $i \in \mathbb{Z}_{\leq 0}$ and $x \in \mathcal{O}_{\pi}$.
\end{lemma}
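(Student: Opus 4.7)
The plan is to use the standard $\pi$-adic expansion argument available in any complete discrete valuation ring, specialized to the fact that the residue field $\mathcal{O}_\pi/\chi\mathcal{O}_\pi \cong \mathbb{F}_3$ has $\{0,1,2\}$ as a natural set of representatives (lifted via $\mathbb{Z} \hookrightarrow \mathcal{O}_F \hookrightarrow \mathcal{O}_\pi$). The first step is to verify this identification of the residue field: since $3 = \chi \cdot (1-\omega^2)$ (using $\chi = 1-\omega$ and $1+\omega+\omega^2 = 0$) and $1-\omega^2$ is a unit in $\mathcal{O}_\pi$ (its reduction mod $\chi$ is $1 - 1 = 0$… actually $\omega \equiv 1 \pmod \chi$ so $1-\omega^2 \equiv 0$, so I should instead note directly that $\mathcal{O}_F/\chi\mathcal{O}_F \cong \mathbb{F}_3$ as stated in the preceding text, and this isomorphism extends to $\mathcal{O}_\pi/\chi\mathcal{O}_\pi$ since $\mathcal{O}_F$ is dense in $\mathcal{O}_\pi$ and $\chi\mathcal{O}_\pi \cap \mathcal{O}_F = \chi\mathcal{O}_F$).

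For the first assertion, I would proceed by induction to produce a Cauchy sequence whose limit equals $x$. Given $x \in \mathcal{O}_\pi$, pick the unique $x_0 \in \{0,1,2\}$ with $x \equiv x_0 \pmod{\chi\mathcal{O}_\pi}$; then $x - x_0 \in \chi \mathcal{O}_\pi$, so $(x - x_0)/\chi \in \mathcal{O}_\pi$ and we can repeat. Inductively one obtains digits $x_0,x_1,\dots \in \{0,1,2\}$ such that
\begin{equation}
    x - \sum_{j=0}^{n} x_j \chi^j \in \chi^{n+1}\mathcal{O}_\pi.
\end{equation}
Since $|\chi^{n+1}|_\pi = 3^{-(n+1)/2} \to 0$, the partial sums form a Cauchy sequence converging to $x$ in the complete field $F_\pi$. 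This gives the desired expansion.

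For the second assertion, let $y \in F_\pi \setminus \{0\}$ and set $i = \min(0, v_\pi(y)) \in \mathbb{Z}_{\leq 0}$. Then $\chi^{-i} y$ has valuation $v_\pi(y) - i \geq 0$, hence lies in $\mathcal{O}_\pi$, so writing $x = \chi^{-i} y$ gives $y = \chi^{i} x$ as required (the case $y = 0$ is trivial). Here I use that $v_\pi$ extends continuously from $F$ to $F_\pi$ and takes integer values, which follows from the construction of $F_\pi$ as the metric completion under $|\cdot|_\pi$.

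The only genuinely delicate point is the existence of the lift $\{0,1,2\} \hookrightarrow \mathcal{O}_\pi$ reducing to all of $\mathcal{O}_\pi/\chi\mathcal{O}_\pi$; once one knows $\mathcal{O}_F/\chi\mathcal{O}_F = \mathbb{F}_3$ (already recorded in the preliminaries via the footnote counting $|\mathcal{O}_F/\chi^2\mathcal{O}_F| = 9$) and that $\mathcal{O}_F$ surjects onto $\mathcal{O}_\pi/\chi\mathcal{O}_\pi$ by density, the remainder of the argument is the routine construction of the $\chi$-adic expansion. I expect no serious obstacle; the main thing is to invoke completeness of $F_\pi$ to conclude that the series actually converges to $x$ rather than merely approximating it at each finite stage.
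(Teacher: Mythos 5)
Your proposal is correct and follows essentially the same route as the paper: recursively extract digits in $\{0,1,2\}$ using $\mathcal{O}_\pi/\chi\mathcal{O}_\pi \cong \mathbb{F}_3$. You are more complete than the paper's version, which stops after stating the recursion — you explicitly invoke completeness of $F_\pi$ to justify convergence of the series and also prove the second assertion (every element of $F_\pi$ is $\chi^i x$ with $i\le 0$), which the paper's proof leaves unaddressed.
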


\begin{remark}
 $F_{\pi} = \mathbb{Q}_3[\chi] = \mathbb{Q}_3[\omega]$, $\mathcal{O}_{\pi} = \mathbb{Z}_3[\chi] = \mathbb{Z}_3[\omega]$
\end{remark}
\subsection{Involutions, fixed subrings and bilinear form}

By $\overline{(\cdot)}$, we mean $\omega \mapsto \omega^{-1}$. This defines an automorphism of the rings $F,\mathcal{O}_F, F_{\pi}, \mathcal{O}_{\pi}$ and has the fixed subrings $\mathbb{Q},\mathbb{Z}, \mathbb{Q}_3, \mathbb{Z}_3$ respectively. The involution $\overline{(\cdot)}$ commutes with all the natural inclusions among these rings, wherever applicable. On $\mathcal{O}_{\pi}/\pi \mathcal{O}_{\pi} \simeq \mathbb{F}_3$, $\overline{(\cdot)}$ is the trivial automorphism since $\omega \equiv 1 \pmod{\pi}$ (indeed, $\overline{(\cdot)}$ preserves $\pi \mathcal{O}_{\pi} \subseteq \mathcal{O}_{\pi}$).

The residue field of $\mathbb{Q}_{3}$ is also $\mathbb{F}_{3}$, the same as the residue field of $F_\pi$. 

Wherever the involution $\overline{(\cdot)}$ is defined for a ring $R$, one can define on $R^n$ for any $n$ the bilinear form
\begin{equation}
    \langle\ x,y \ \rangle
 = \sum_{i=1}^n x_i \overline{y_i}.\end{equation}

\subsection{Lattices over local valuation rings}

\begin{definition}
Consider a domain $R$ whose field of fractions is $K$. For a $K$-vector space $V$, 
we define an $R$-lattice in $V$ to be a finitely generated torsion-free $R$-module $\Lambda \subseteq V$ such that $K \Lambda = V$. We denote $\dim V$ to be the rank of the lattice.
\end{definition}

In particular, this means that if we have a $\mathcal{O}_\pi$-lattice $\Lambda$ in $F_{\pi}^3$, then we can conclude that for some $v_1,\dots,v_3 \in F_{\pi}^{3}$, one has
\begin{equation}
  \Lambda = \mathcal{O}_{\pi} v_1 \oplus \mathcal{O}_{\pi} v_{2} \oplus \mathcal{O}_{\pi} v_3.
\end{equation}

This implies that there exists $g \in \M_{3}(F_{\pi})$ such that $\Lambda = g \mathcal{O}_{\pi}^{3}$. This $g$ is not unique, but if  $g \mathcal{O}_{\pi}^{3} = g' \mathcal{O}_{\pi}^{3}$ for some $g'$, then one knows that $g^{-1} g' \in \GL_3(\mathcal{O}_{\pi})$. 

This leads to the following definitions.

\begin{definition}
  Let $\Lambda$ be a lattice over a PID $R$ of rank $n$ and let $K$ be the field of fractions of $R$. We denote $\det(\Lambda) \subseteq K$ to be the $R$-module generated by $\det(g)$, where $\det(g)$ 
  is the determinant of $g \in \M_{n}(K)$ such that $\Lambda = g R^{n}$.
\end{definition}
\begin{remark}
  For an $\mathcal{O}_{\pi}$-lattice $\Lambda$, we know that $ \det(\Lambda) = \pi^{k}$ for some $k \in \mathbb{Z}$.
\end{remark} 
\begin{definition}[Equivalence of $\mathcal{O}_{\pi}$-lattices]
    Two $\mathcal{O}_{\pi}$-lattices $\Lambda_{1},\Lambda_{2}$ are considered $\pi$-equivalent if 
    \begin{equation}
        \Lambda_{1} = \pi^{k}\Lambda_{2} \text{ for some } k \in \mathbb{Z}.
    \end{equation}
\end{definition}

\subsection{Dual lattice}
  \label{se:dual_lattice}

\begin{definition}
	For any $\mathcal{O}_{\pi}$-lattice $\Lambda \subseteq F_{\pi}^{3}$, one defines the dual lattice to be 
	\begin{equation}
		\dual{ \Lambda } = \{x \in F_{\pi}^{3} \mid \langle x,y \rangle \in \mathcal{O}_{\pi} \text{ for each }y \in \Lambda \}.
	\end{equation}
\end{definition}
The dual lattice $\dual{\Lambda}$ is also a $\mathcal{O}_{\pi}$-lattice and upon further dualization gives $\dual{(\dual{\Lambda})}= \Lambda$ .
One then readily checks that if $A \in \M_{3}(F_{\pi})$ is a matrix whose columns are a basis of $\Lambda$, then the conjugate-transpose $( A^{*} )^{-1} \in \M_{3}(F_{\pi})$ contains in its columns a basis of $\dual{\Lambda}$.

\begin{proposition}\label{le:dual_equivalence}
Dualization satisfies the following properties:
\begin{itemize}
\item One has for $\mathcal{O}_{\pi}$-lattices $\Lambda_{1},\Lambda_{2}\subseteq F_{\pi}^{3}$ that
$\Lambda_{1} \subseteq \Lambda_2 \Rightarrow \dual{ \Lambda_{2} }\subseteq \dual{\Lambda_{1}}.$
\item If $\Lambda_{1}$ and $\Lambda_{2}$ are $\pi$-equivalent, then so are $\dual{\Lambda_{1}}$ and $\dual{\Lambda_{2}}$.
\end{itemize}
\end{proposition}

\begin{definition}
A lattice is called self-dual if $\dual{\Lambda}= \Lambda$. 
We say that it is self-dual up to $\pi$-equivalence if $\dual{\Lambda} = \pi^{i} \Lambda$ for some $ i \in \mathbb{Z}$.
\end{definition}
\begin{remark}
The lattice $\Lambda = A \cdot \mathcal{O}_{\pi}^{3}$ is self-dual if $A \in \U_{3}(F_{\pi})$. The converse is not always true.
\end{remark}

For example, $\mathcal{O}_{{\pi}}$ is a self-dual lattice, as is $\mathcal{O}_{\pi}^{3}$.
Here is a useful lemma about duality.
\begin{lemma}
\label{le:dual_scaling}
Suppose $\Lambda$ is a lattice over $\mathcal{O}_{\pi}$ of rank $3$ such that 
$\Lambda$ is $\pi$-equivalent to $\dual{\Lambda}$. That is, 
for some $i \in \mathbb{Z}$, we have 
 $ \dual{\Lambda} = \pi^{i}\Lambda.$
Then, necessarily $i \in 2\mathbb{Z}$ and $\pi^{i/2} \Lambda$ is a self-dual lattice.
\end{lemma}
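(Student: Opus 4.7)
The plan is to track the determinant ideal under dualization. Recall that if $\Lambda = g\mathcal{O}_\pi^3$, then a basis for $\dual{\Lambda}$ is given by the columns of $(g^*)^{-1}$, so $\det(\dual{\Lambda}) = \det((g^*)^{-1})\mathcal{O}_\pi = \overline{\det(g)}^{-1}\mathcal{O}_\pi$. Since the non-trivial Galois automorphism $\overline{(\cdot)}$ preserves the $\pi$-adic valuation (as established in the excerpt via \cref{eq:iso_of_galois}), we have $v_\pi(\overline{\det(g)}) = v_\pi(\det(g))$. Writing $\det(\Lambda) = \pi^{k}$, this gives the fundamental identity
\begin{equation}
\det(\dual{\Lambda}) = \pi^{-k}.
\end{equation}

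First I would apply this with $\Lambda$ itself, yielding $\det(\dual{\Lambda}) = \pi^{-k}$. Next, I would compute $\det(\pi^i\Lambda)$ directly: multiplying each basis vector by the scalar $\chi^i$ (which generates $\pi$) multiplies the determinant by $\chi^{3i}$, so $\det(\pi^i\Lambda) = \pi^{3i+k}$. Equating with the assumption $\dual{\Lambda} = \pi^i\Lambda$, I get $\pi^{-k} = \pi^{3i+k}$, hence $2k + 3i = 0$. Since $\gcd(2,3)=1$, this forces $i$ to be even; write $i = 2j$.

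For the self-duality of $\pi^j\Lambda$, I would use the scaling behaviour of the dual. For $c \in F_\pi^\times$, one checks from the definition that $\dual{(c\Lambda)} = \bar{c}^{-1}\dual{\Lambda}$. Taking $c = \chi^j$ and using the fact (from $\chi\bar{\chi}=3$ together with $v_\pi(3)=2$) that $\bar{\chi} = u\chi$ for a unit $u \in \mathcal{O}_\pi^\times$, one has $\bar{c}^{-1}\dual{\Lambda} = \chi^{-j}\dual{\Lambda} = \pi^{-j}\dual{\Lambda}$ since units stabilize any $\mathcal{O}_\pi$-module. Therefore
\begin{equation}
\dual{(\pi^j \Lambda)} \;=\; \pi^{-j}\dual{\Lambda} \;=\; \pi^{-j}\pi^{2j}\Lambda \;=\; \pi^{j}\Lambda,
\end{equation}
which is exactly the self-duality statement.

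The only genuinely delicate point is the compatibility $\bar{\chi} = u\chi$ with $u$ a unit, which justifies passing between the ideal $\pi^j$ and the scalar $\chi^j$ when dualizing; everything else is bookkeeping with determinants and the Galois-invariance of $v_\pi$. I do not expect a real obstacle here since both ingredients are already in place earlier in the excerpt.
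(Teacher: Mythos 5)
Your proposal is correct and follows essentially the same determinant-ideal argument as the paper: track $\det(\dual{\Lambda}) = \overline{\det(\Lambda)}^{-1}$ and $\det(\pi^i\Lambda) = \pi^{3i}\det(\Lambda)$, use Galois-invariance of $v_\pi$ to get $2k+3i=0$, and conclude $i$ is even. The only difference is that you explicitly verify the self-duality of $\pi^{i/2}\Lambda$ via the scaling rule $\dual{(c\Lambda)} = \bar{c}^{-1}\dual{\Lambda}$ and the fact that $\overline{\chi}/\chi = -\omega^2$ is a unit, whereas the paper dismisses this last step as trivial.
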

\begin{proof}
One can check that
that $\det(\dual{\Lambda}) = \overline{ \det(\Lambda) }^{-1}$. Also, we have that $\det(\pi^{i} \Lambda) = \pi^{3i} \det (\Lambda)$.

Hence, we conclude that $\det(\Lambda) \cdot \overline{\det({\Lambda})} = \pi^{-3i}$. But we know that if $\det(\Lambda) = \pi^{k}$ for some $k$, then $\overline{\det(\Lambda)}$ is also $\pi^{k}$ as $\overline{(\cdot)}$ leaves $\pi$ invariant. Therefore, $i$ must be even.
The second statement trivially follows.
\end{proof}
\begin{remark}
  The rank $3$ in Lemma \ref{le:dual_scaling} can be replaced by any other odd number, but fails for even dimensions not equal to $2$.
\end{remark}

\subsection{Cartan decomposition}

Lastly, we recall the following theorem about Cartan decompositions over local fields which we make use of extensively in this paper.
\begin{theorem}[Theorem 3.3 of \cite{prasad2001representation}]
    \label{th:cartan}
Let $g \in \GL_{n}(F)$ be a matrix over a local field $F$ whose ring of integers is $\mathcal{O}$ with uniformizing element $\varpi$. 
Then, one can find $k,k' \in \GL_{n}(\mathcal{O})$ and some diagonal $a \in \GL_{n}(F)$ with entries that are powers of $\varpi$ such that 
$    g= k a k'.$
Furthermore the choice of $a = \diag( \varpi^{ \lambda_{1}}, \dots ,\varpi^{\lambda_{n}})$ is unique if we add the condition
that $\lambda_{1} \geq \lambda_{2} \geq \dots \geq \lambda_{n}$. 
\end{theorem}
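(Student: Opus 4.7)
The plan is to recognize the statement as the elementary divisor theorem (Smith normal form) for the discrete valuation ring $\mathcal{O}$, which is in particular a principal ideal domain. The decomposition $g = kak'$ describes $g$ as a representative of a double coset in $\GL_{n}(\mathcal{O}) \backslash \GL_{n}(F) / \GL_{n}(\mathcal{O})$, and the claim is that each such double coset has a unique diagonal representative of the stated form.

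For existence, I would first scale by $\varpi^{N}$ for $N$ sufficiently large in order to reduce to $g \in \M_{n}(\mathcal{O})$; this only shifts the eventual $\lambda_{i}$'s by a common constant and does not affect the double-coset structure. Next, I would locate an entry $g_{ij}$ of minimum $\varpi$-adic valuation $\lambda$ and move it to position $(1,1)$ by left and right multiplication by permutation matrices, which lie in $\GL_{n}(\mathcal{O})$. Since every other entry of $g$ has valuation $\geq \lambda$, it is an $\mathcal{O}$-multiple of $g_{ij}$; adding suitable $\mathcal{O}$-multiples of row $1$ (resp.\ column $1$) to the other rows (resp.\ columns) clears the remainder of the first row and column, and these elementary operations themselves lie in $\GL_{n}(\mathcal{O})$. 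Multiplying row $1$ by the inverse of the unit $g_{ij}/\varpi^{\lambda}$ replaces the $(1,1)$ entry by $\varpi^{\lambda}$ exactly. The matrix is now block diagonal with a leading $\varpi^{\lambda}$; recursing on the $(n-1) \times (n-1)$ lower block produces a diagonal matrix with entries powers of $\varpi$. A final permutation conjugation absorbed into $k$ and $k'$ sorts the diagonal entries to satisfy $\lambda_{1} \geq \dots \geq \lambda_{n}$, and undoing the initial $\varpi^{N}$ scaling returns to the original $g$ with possibly negative exponents.

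For uniqueness, I would exhibit the $\lambda_{i}$'s as intrinsic invariants of the double coset. For each $i \in \{1,\dots,n\}$, set $\mu_{i}(g) = \min_{M} v_{\varpi}(\det M)$, where $M$ ranges over $i \times i$ submatrices of $g$. By the Cauchy--Binet formula, any $i \times i$ minor of $kg$ or $gk'$ for $k,k' \in \GL_{n}(\mathcal{O})$ is an $\mathcal{O}$-linear combination of $i \times i$ minors of $g$; applying the same reasoning to the inverse transformation gives equality of minimum valuations, so $\mu_{i}$ is invariant on the double coset. Evaluating on the sorted diagonal $a = \diag(\varpi^{\lambda_{1}},\dots,\varpi^{\lambda_{n}})$, every nonvanishing $i \times i$ minor equals $\varpi^{\lambda_{j_{1}}+\dots+\lambda_{j_{i}}}$ for some subset of indices, and the minimum is achieved on the $i$ smallest exponents, giving $\mu_{i}(a) = \lambda_{n-i+1}+\dots+\lambda_{n}$. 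The individual exponents are then recovered by successive differences $\lambda_{n-i+1} = \mu_{i}(g) - \mu_{i-1}(g)$, proving uniqueness.

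The main obstacle I anticipate is not any single step but the careful verification that every elementary operation used during the reduction actually lies in $\GL_{n}(\mathcal{O})$ and not merely in $\GL_{n}(F)$. This hinges on the defining property of a DVR that elements are totally ordered by their valuations, so that the minimum-valuation entry genuinely divides every other entry inside $\mathcal{O}$; without this property the Gaussian elimination step would fail, and indeed this is the reason the theorem holds over local rings but has a more intricate form over Dedekind domains. The uniqueness argument via minors is cleaner, but one must carefully track the unit determinants coming out of Cauchy--Binet to confirm invariance on both sides.
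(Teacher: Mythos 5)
Your proof is correct. The paper does not prove this theorem; it simply cites Theorem~3.3 of Prasad's representation theory notes, so there is no ``paper approach'' to compare against. What you have reconstructed is the standard Smith normal form argument over a discrete valuation ring, and both halves are sound: for existence, the minimum-valuation pivot and the $\GL_n(\mathcal{O})$-elementary operations are exactly what makes this work in a DVR (as you correctly flag, divisibility by the pivot is what fails over a general Dedekind domain); for uniqueness, the invariants $\mu_i(g)$ — the minimal valuation of an $i\times i$ minor — are indeed invariant under $\GL_n(\mathcal{O})$-multiplication on either side by Cauchy--Binet (one inequality from $k$, the reverse from $k^{-1}$), and on a sorted diagonal they evaluate to the partial sums $\lambda_{n-i+1}+\cdots+\lambda_n$, from which the $\lambda_i$ are recovered by successive differences. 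One small remark: you could shorten the existence step slightly by observing that the initial $\varpi^N$ rescaling is not strictly necessary, since the elimination argument never actually uses integrality of the entries — only that the pivot has minimum valuation — but this is a stylistic point, not a gap.
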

\section{Buildings over unitary groups}

\subsection{Lattice chains}

We will consider the lattice chain model of Bruhat-Tits buildings given by Abramenko-Nebe \cite{abramenko2002lattice}. 
\begin{definition}
    \label{de:lattice_chain}
Suppose $R$ is a discrete valuation ring with a uniformizing element $\varpi$ and 
whose field of fractions is $K$. Consider the vector space $K^{n}$ over $K$.

A lattice chain of dimension $n$ over a ring $R$ is defined
	as a nested sequence of lattices with proper inclusions
\begin{equation}
    \cdots \subset \Lambda_{-2} \subset \Lambda_{-1} \subset \Lambda_{0} \subset \Lambda_{1} \subset \Lambda_{2} \subset \cdots
\end{equation}

	where each of the $\{ \Lambda_{i} \}_{i \in \mathbb{Z}}$ is an $R$-lattice inside $K^{n}$ 
	with the condition that the chain is preserved under the action $\Lambda \mapsto \varpi \Lambda$.
\end{definition}

\begin{remark}
\label{re:lattice_chains_admissible}
In \cite{abramenko2002lattice}, the definition given above is the definition of an ``admissible'' lattice chain and a lattice chain is simply a nested sequence of lattices.
\end{remark}

The infinite chains in Definition \ref{de:lattice_chain} are actually just finitely many equivalence classes. One can prove so in the following lemma.
\begin{lemma}
  \label{le:lattice_chains_are_finite}
  Suppose $\{\Lambda_i\}$ is a lattice chain of dimension $n$ over a discrete valuation ring $R$. Then, up to multiplication by powers of $\varpi$, the chain contains at most $n$ inequivalent lattices.
\end{lemma}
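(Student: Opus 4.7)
The plan is to exploit the $\varpi$-invariance of the chain to reduce the statement to a single finite length computation in an $R$-module of rank $n$. First I would observe that since $\varpi\Lambda_i \subsetneq \Lambda_i$ and the chain is stable under $\Lambda \mapsto \varpi\Lambda$, this shift acts as an order-preserving self-bijection of the $\mathbb{Z}$-indexed chain that strictly decreases the index; any such self-bijection is a fixed translation, so there is a single positive integer $m$ with $\varpi \Lambda_i = \Lambda_{i-m}$ for every $i$. Consequently each lattice in the chain is $\pi$-equivalent to exactly one of $\Lambda_0, \Lambda_1, \dots, \Lambda_{m-1}$, so the number of $\pi$-equivalence classes is precisely $m$, and the lemma reduces to showing $m \le n$.

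To bound $m$, I would zoom in on the finite sub-chain
\[
\Lambda_0 \subsetneq \Lambda_1 \subsetneq \cdots \subsetneq \Lambda_m = \varpi^{-1}\Lambda_0
\]
and compute the $R$-length of the outer quotient. Since $R$ is a PID, the structure theorem (already invoked in the paper) gives $\Lambda_0 \cong R^n$, so $\varpi^{-1}\Lambda_0 / \Lambda_0 \cong (R/\varpi R)^n$, which has length exactly $n$ as an $R$-module. Additivity of length in short exact sequences then yields
\[
n \;=\; \sum_{i=0}^{m-1} \operatorname{length}_R\bigl(\Lambda_{i+1}/\Lambda_i\bigr),
\]
and since each inclusion is strict, every summand is at least $1$, giving $m \le n$.

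The argument is essentially bookkeeping, so I do not anticipate a serious obstacle. The one subtle point is verifying that the shift period $m$ is a uniform integer independent of $i$, which is what legitimizes collapsing the infinite chain to a single finite sub-chain; this follows from the rigidity of order-preserving cofinal self-bijections of $\mathbb{Z}$. Everything else — freeness of $\Lambda_0$ over the PID $R$, and additivity of length under short exact sequences — is either recorded earlier in the paper or completely standard.
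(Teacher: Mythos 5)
Your argument is correct and takes essentially the same route as the paper: both bound the count by looking at the lattices between consecutive $\varpi$-scalings of $\Lambda_0$ and using that $\Lambda_0/\varpi\Lambda_0 \cong (R/\varpi R)^n$ is an $n$-dimensional vector space over the residue field (the paper phrases this as a flag of subspaces, you as additivity of $R$-length, but the content is identical). Your explicit derivation of the uniform period $m$ with $\varpi\Lambda_i = \Lambda_{i-m}$ is a small tightening of what the paper leaves implicit when it asserts that every class is represented between $\varpi\Lambda_0$ and $\Lambda_0$; this presumes, as the paper's \cref{de:lattice_chain} does via ``preserved,'' that $\varpi$-scaling acts bijectively (not merely injectively) on the chain.
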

\begin{proof}
    Let's consider $\Lambda_{0}$. We know by Definition \ref{de:lattice_chain} that 
    $\varpi \Lambda_{0}$ is also in the chain. What is the maximum number of lattices $ \varpi \Lambda_{0} \subsetneq \Lambda_{1} \subsetneq \Lambda_{2} \subsetneq \cdots \subsetneq \Lambda_{0}$ that can be between
    $\varpi\Lambda_{0}$ and $ \Lambda_{0}$?

    The answer is $n-1$. Indeed, notice that $\Lambda_{0}/ \varpi\Lambda_{0} \simeq k^{n}$ where $k = R/\varpi R$. We know that $k$ is a field. One can show that the intermediate lattices $\Lambda_{1}, \Lambda_{2}, \dots$ form a flag of subspaces of $k^{n}$. Hence, they are finitely many and at most $n-1$.
\end{proof}

\begin{definition}
    We define 
    the rank of a lattice chain $\{\Lambda_{i}\}$
    to be 
    the number of $\varpi$-equivalence classes inside a lattice chain.
\end{definition}

We say that admissible lattice chain $\{L_i\}$ contains $\{ L_i'\}$ if the latter chain appears as a subsequence in the former chain. If $\{L_{i}\}$ has rank $n$, the $\{L_{i}'\}$ has rank at most $n$.

In this paper, we will only consider lattice chains relevant for qutrit synthesis. In our setting, $\varpi = \pi$, $K = F_\pi$, $R= \mathcal{O}_\pi$ and $n=3$. A lattice chain $\dots \subset \Lambda_{-2} \subset \Lambda_{-1} \subset \Lambda_0 \subset \Lambda_1 \subset \Lambda_2 \subset \dots$ will consist of $\mathcal{O}_\pi$ lattices over $F_\pi^3$.

\subsection{Self-dual lattice chains}

Observe that when we have a nested chain of lattices we can dualize each lattice in the chain and obtain
\begin{equation}
    \cdots \subset \Lambda_{-2} \subset \Lambda_{-1} \subset \Lambda_{0} \subset \Lambda_{1} \subset  \Lambda_{2} \subset \cdots \quad \longmapsto \quad
    	\cdots \supset \dual{\Lambda}_{-2} \supset \dual{\Lambda}_{-1} \supset \dual{\Lambda}_{0} \supset \dual{\Lambda}_{1} \supset  \dual{\Lambda}_{2} \supset \cdots
\end{equation}
The right hand side is called the dual lattice chain.
Our interest is in self-dual lattice chains. 
\begin{definition}
A lattice chain over $\mathcal{O}_{\pi}$ is called a self-dual lattice chain if it is the same chain when we dualize each lattice in the chain.
\end{definition}

\begin{lemma}
    \label{selfdual_oradjacent}
    For a self-dual lattice chain $\{\Lambda_i\}_{i \in \mathbb{Z}}$, at least one of the two statements are true.
\begin{enumerate}
\tightlist
    \item One has $\Lambda_j = \dual{\Lambda_j}$ for some $j \in \mathbb{Z}$, or
    \item One has $\Lambda_j = \dual{\Lambda_{j+1}}$ for some $j \in \mathbb{Z}$.
\end{enumerate}
\end{lemma}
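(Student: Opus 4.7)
The plan is to parametrize the action of dualization on the index set of the chain. Since the chain is self-dual and by \cref{de:lattice_chain} all inclusions in the chain are strict, each lattice $\Lambda_i$ appears exactly once, so there is a well-defined map $\sigma : \mathbb{Z} \to \mathbb{Z}$ determined by
\begin{equation}
\dual{\Lambda_i} = \Lambda_{\sigma(i)}.
\end{equation}
First I would verify three properties of $\sigma$: it is a bijection (because self-duality is a bijection of the chain onto itself and the $\Lambda_i$ are pairwise distinct); it is an involution (because $\dual{(\dual{\Lambda})} = \Lambda$, so $\sigma \circ \sigma = \mathrm{id}$); and it is strictly order-reversing (by \cref{eq:dual_inclusion}, applied to the strict inclusions $\Lambda_i \subsetneq \Lambda_{i+1}$).

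Next I would classify all strictly order-reversing involutions of $\mathbb{Z}$. Any such $\sigma$ satisfies $\sigma(i+1) < \sigma(i)$, and since it is an order-reversing bijection $\mathbb{Z} \to \mathbb{Z}$ between two sets with the same successor structure, one has $\sigma(i+1) = \sigma(i)-1$. Setting $c := \sigma(0)$, it follows that $\sigma(i) = c - i$ for every $i \in \mathbb{Z}$. The involution property $\sigma(\sigma(i)) = i$ is automatic for maps of this form, so every such $c \in \mathbb{Z}$ is allowed.

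Finally, I would split on the parity of $c$. If $c = 2j$ is even, then $\sigma(j) = j$, which gives $\dual{\Lambda_j} = \Lambda_j$, establishing case (1). If $c = 2j+1$ is odd, then $\sigma(j) = j+1$, so $\dual{\Lambda_j} = \Lambda_{j+1}$; dualizing both sides and using $\dual{(\dual{\Lambda_j})} = \Lambda_j$ gives $\Lambda_j = \dual{\Lambda_{j+1}}$, which is case (2).

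The only real obstacle is verifying that $\sigma$ is well-defined as a map into $\mathbb{Z}$, i.e., that $\dual{\Lambda_i}$ actually lies in the chain for each $i$; but this is precisely the hypothesis of self-duality. The rest of the argument is the elementary combinatorial observation that order-reversing involutions of $\mathbb{Z}$ are exactly the reflections $i \mapsto c - i$, which either have a fixed point (when $c$ is even) or swap adjacent indices (when $c$ is odd).
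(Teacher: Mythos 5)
Your proof is correct and takes essentially the same approach as the paper: the paper also introduces the bijection $d$ on indices (with $\Lambda_i = \dual{\Lambda_{d(i)}}$, so $d = \sigma$), asserts the recursion $d(i-1) = d(i) + 1$, and then splits on the parity of the index $n$ with $d(n) = 0$, which is exactly your parity split on $c = \sigma(0)$. Your write-up is somewhat more careful than the paper's — you explicitly justify that $\sigma$ is well-defined via strictness of the inclusions, and you derive $\sigma(i+1) = \sigma(i)-1$ from order-reversal plus bijectivity rather than asserting it — but the underlying argument is the same.
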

\begin{proof}

    A lattice chain is equal to its dual if and only if there exists a bijection $d: \mathbb{Z} \rightarrow \mathbb{Z}$ such that $\Lambda_{i} = \dual{\Lambda_{d(i)}}$ that respects the condition $d(i-1) = d(i)+1$. This map enforces that the lattices in the dual chain are the same as in the original chain, but in a reversed order. 
    \begin{enumerate}
    \tightlist
        \item Let $2j \in \mathbb{Z}$ satisfy $d(2j) = 0$.  $d(j) = d(2j - j)= j$ implies that $\Lambda_j = \dual{\Lambda_j}$ 
        \item Let $2j +1 \in \mathbb{Z}$ satisfy $d(2j+1) = 0$.  $d(j) = d(2j +1 - (j+1))= j+1$ implies that $\Lambda_j = \dual{\Lambda_{j+1}}$ 
    \end{enumerate} 
\end{proof}

\subsection{Simplicial complex of lattice chains}

Now we are ready to define the building on which our finitely generated gate set will act. 
\begin{definition}
We define $\mathcal{B}$ to be the set of all self-dual lattice chains of rank $3$ over $\mathcal{O}_{\pi}$. This is the building associated to the unitary group.
\end{definition}

\subsubsection{Description of 0-simplices and 1-simplices}

As a set, $\mathcal{B}$ is just a collection of chains. But it actually has the structure of a simplicial complex. The $0$-simplices of $\mathcal{B}$ are the minimal lattice chains. The $1$-simplices are lattice chains that strictly contain only $0$-simplices. The $2$-simplices are lattice chains that only contain $1$-simplices and $0$-simplices, and so on.

\begin{remark}
    The set of all lattice chains (not necessarily self-dual) form a 2-building for the group $\PGL_3$. In the $\PGL_3$-building, lattice chains of rank $k$ form the $(k-1)$-simplices for $k=1,2,3$.
    The building associated to $\U_{3}$ is a sub-building of this building.
    We will not discuss the building of $\PGL_3$ for simplicity of exposition.
\end{remark}

We shall prove the following properties about $\mathcal{B}$ (c.f. Appendix \ref{ap:simplex_classification}).
\begin{proposition}
    \label{pr:simplex_classification}
For the $\mathcal{B}$ described above, we have the following classification of simplices.
  \begin{enumerate}
  \tightlist
    \item All the $0$-simplices  of $\mathcal{B}$ are one of the following two types.
      \begin{itemize}
\tightlist
          \item 
Lattice chains of the form $\{ \pi^{i}\Lambda\}_{i \in \mathbb{Z}}$ for some self-dual lattice $\Lambda \subseteq F_{\pi}^{3}$  arranged as 
\begin{equation}
    \dots \subseteq \pi^{2} \Lambda \subseteq \pi \Lambda \subseteq \Lambda \subseteq \pi^{-1} \Lambda \subseteq \dots
\end{equation}
    \item 
	Lattice chains of the form $\{ \pi^i\Lambda, \pi^i \dual{\Lambda}\}_{i \in \mathbb{Z}}$ 
	 such that $\dual{\Lambda} \subsetneq\Lambda$ and
	 \begin{equation}
	       \dots \subseteq \pi \Lambda \subseteq \dual{\Lambda}  \subseteq \Lambda \subseteq \pi^{-1} \dual{\Lambda}  \subseteq \dots
	 \end{equation}
      \end{itemize}
  \item All the 1-simplices of $\mathcal{B}$ are of the form $\{ \pi^i\Lambda_1, \pi^i \Lambda_0, \pi^i \dual{\Lambda_{1}} \}_{i \in \mathbb{Z}}$  for some lattices $\Lambda_0,\Lambda_1\subseteq F_{{\pi}}^3$ 
      such that $\Lambda_0$ is self dual and $\Lambda_1^{\dual{}} \subsetneq \Lambda_1$ arranged as
      \begin{equation}
              \dots \subseteq \pi \Lambda_0 \subseteq \pi\Lambda_{1}  \subseteq \dual{\Lambda_{1}} \subseteq \Lambda_0 \subseteq \Lambda_{1} \subseteq \pi^{-1} \dual{\Lambda_{1}} \subseteq \dots
      \end{equation}
      
      \item There are no 2-simplices in $\mathcal{B}$.
      \end{enumerate}
\end{proposition}

We will henceforth think of $0$-cells as vertices and $1$-cells as edges. Therefore, $\mathcal{B}$ has the structure of a graph.
We will call $0$-cells of the form $\{ \pi^{i}\Lambda\}_{ i \in \mathbb{Z}}$ for a self-dual lattice $\Lambda$ as ``pure'' vertices. We will call the $0$-cells that are not ``pure'' as ``alternating'' cells.
We will denote pure cells to be $P_{\mathcal{B}}$ and alternating as $A_{\mathcal{B}}$. One then has that vertex set of $\mathcal{B}$ decomposes as 

\[\text{ Vertices of }\mathcal{B} = P_{\mathcal{B}} \sqcup A_\mathcal{B}.\]
Denote $e_{\mathcal{B}}$ to be the distinguished $0$-cell given by $\{\pi^{i} \mathcal{O}_{\pi}^{3}\}_{i \in \mathbb{Z}}$ which we will often call the ``origin'' (also written $v_0$ or $e_0$ below).

\begin{remark}
    $P_\mathcal{B}$ and $A_\mathcal{B}$ split $\mathcal{B}$ into a bipartite graph.
\end{remark}

One can summarize the structure of $\mathcal{B}$ according to the following proposition (c.f. Appendix \ref{se:forest_proof})
\begin{proposition} \label{pr:forest}
    The $0$-simplices and $1$-simplices of $\mathcal{B}$ form a forest. That is, $\mathcal{B}$ has no loops.
\end{proposition}
\subsubsection{Some subspaces in finite field vector spaces}

We are interested in establishing some general facts about quadratic forms on vector spaces over $\mathbb{F}_{3}$.

First we consider the case of $\mathbb{F}_{3}^{3}$.
  Consider $\mathbb{F}_{3} ^{3}$ with the standard inner product $\langle \ , \ \rangle : \mathbb{F}_{3}^{3} \times \mathbb{F}_{3}^{3} \to \mathbb{F}_{3}$ given by: 
  \begin{equation}
      \label{eq:defi_of_angle3}
      \langle x^{(1)},x^{(2)} \rangle = x^{(1)}_{1} x^{(2)}_{1} + 
 x^{(1)}_{2} x^{(2)}_{2} + 
 x^{(1)}_{3} x^{(2)}_{3}  
  \end{equation}
  where $ x^{(i)} = (x^{(i)}_{1},
 x^{(i)}_{2}, x^{(i)}_{3}) \in \mathbb{F}_{3}^{3} \text{ for } i = 1,2.$ 

 For any bilinear form $\langle \ , \ \rangle:\mathbb{F}_{3}^{3} \times \mathbb{F}_{3}^{3} \rightarrow  \mathbb{F}_{3}$, one defines the dual space of a vector space
 $V \subseteq \mathbb{F}_{3}^{3}$ with respect to $\langle\ , \ \rangle$ to be 
 \begin{equation}
     V^{\perp} = 
     \{v \in  \mathbb{F}_{3}^{3} \mid  \langle v,w \rangle = 0 \text{ for all }w \in V \}.
 \end{equation}

\begin{lemma}
  \label{le:mod3_codes}
  Consider $\mathbb{F}_{3}^{3}$ and let $A \in \M_{3}(\mathbb{F}_{3})$ be a symmetric invertible matrix. 
  Consider the bilinear form $\langle\ , \ \rangle_{A}$ given by 
  \begin{equation}
      \langle x,y \rangle _{A} =  \langle A x, y \rangle.
  \end{equation}
  Then, there are exactly $4$ one-dimensional subspaces $V \subseteq \mathbb{F}_{3}^{3}$ such that $V \subseteq V^{\perp}$, where $V^{\perp}$ is the dual of $V$ with respect to $\langle \ , \ \rangle_{A}$.
\end{lemma}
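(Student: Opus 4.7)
The plan is to translate $V \subseteq V^\perp$ into an isotropy condition on the quadratic form $Q(v) = v^T A v$, reduce $A$ to a short list of canonical diagonal forms by a congruence transformation, and count isotropic vectors directly in each case.

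First I would observe that a one-dimensional subspace $V = \mathbb{F}_3 v$ satisfies $V \subseteq V^\perp$ (with respect to $\langle\ ,\ \rangle_A$) if and only if $\langle v, v\rangle_A = v^T A v = 0$. Since $\mathbb{F}_3^\times = \{\pm 1\}$ has order two, each one-dimensional $V$ corresponds to exactly two nonzero vectors, so the number of isotropic lines equals half the number of nonzero $v \in \mathbb{F}_3^3$ with $Q(v) = 0$. It therefore suffices to show that the nondegenerate quadratic form $Q$ has exactly $8$ nonzero zeros in $\mathbb{F}_3^3$.

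Next I would diagonalize: since $\mathrm{char}\,\mathbb{F}_3 = 3 \neq 2$, there exists $P \in \GL_3(\mathbb{F}_3)$ with $P^T A P = D = \diag(a_1,a_2,a_3)$, $a_i \in \mathbb{F}_3^\times = \{1,2\}$. The linear substitution $v = Pw$ is a bijection on $\mathbb{F}_3^3$ that turns $v^T A v$ into $w^T D w$, so it preserves the number of isotropic vectors and I may assume $A = D$. Multiplying $Q$ by an overall nonzero scalar (which does not change its zero set) and permuting coordinates then collapses the four unordered patterns $\{1,1,1\}, \{1,1,2\}, \{1,2,2\}, \{2,2,2\}$ down to the two cases $D = I$ and $D = \diag(1,1,2)$.

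Finally I would count in each case. In $\mathbb{F}_3$ the squares are $\{0,1\}$, with $x^2 = 1$ iff $x \neq 0$. For $D = I$, the equation $x^2 + y^2 + z^2 = 0$ forces the number of nonzero entries to be $\equiv 0 \pmod 3$, which within $\{0,1,2,3\}$ means all three entries are nonzero, giving $2^3 = 8$ nonzero isotropic vectors. For $D = \diag(1,1,2)$, the equation is $x^2 + y^2 + 2z^2 = 0$: if $z = 0$ one gets $x = y = 0$, while if $z \neq 0$ then $z^2 = 1$ forces $x^2 + y^2 = 1$, which requires exactly one of $x, y$ to be nonzero, yielding $2\cdot 2 \cdot 2 = 8$ vectors. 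Dividing by $|\mathbb{F}_3^\times| = 2$ in either case produces exactly $4$ isotropic lines. There is no deep obstacle; the only care needed is in the reduction step, to verify that global scaling plus coordinate permutations really do collapse the diagonal forms into the two cases, after which the result falls out by direct enumeration. (Conceptually, this is the statement that a smooth conic in $\mathbb{P}^2(\mathbb{F}_q)$ has $q+1$ points; for $q = 3$ this recovers $4$.)
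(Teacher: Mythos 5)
Your proof is correct and takes essentially the same route as the paper's: diagonalize the nondegenerate form over $\mathbb{F}_3$, reduce to the two normal forms $x_1^2+x_2^2+x_3^2$ and $x_1^2+x_2^2-x_3^2$, and count the $8$ nonzero isotropic vectors in each case to obtain $4$ isotropic lines. You are somewhat more explicit than the paper in justifying the reduction (scaling plus permutations vs.\ the paper's appeal to the discriminant) and in carrying out the enumeration, and the remark identifying the answer with the $q+1$ points of a smooth conic in $\mathbb{P}^2(\mathbb{F}_q)$ is a nice conceptual check.
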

\begin{proof}
    We know by \cite{lidl1997finite} that in a finite field of odd characteristic, every non-degenerate quadratic form can be diagonalized.
    Hence, for $x = (x_{1},x_{2},x_{3}) \in \mathbb{F}_{3}^{3}$, one gets that there exists some $B \in \GL_{3}(\mathbb{F}_{3})$ such that 
    \begin{equation}
	q(B x) = \langle Bx, ABx \rangle = a x_{1}^{2} + bx_{2}^{2} + c x_{3}^{2}
    \end{equation}
    for some $a,b,c \in \{\pm 1 \}$. 
We want to count the number of $x \in \mathbb{F}_{3}^{3}\setminus \{0\}$ such that $q(Bx) = 0$.
    For this purpose one sees that without loss of generality, we can assume that 
    \begin{equation}
        q(Bx) = x_{1}^{2} + x_{2}^{2} + x_{3}^{2} \text{ or } x_{1}^{2} + x_{2}^{2} - x_{3}^{2}.
    \end{equation}
    The exact diagonalized form of $q(Bx)$ depends on whether $\det A$ is a quadratic residue in $\mathbb{F}_{3}$.
    In both the cases, we have exactly $8$ solutions and they must give rise $4$ subspaces $V \subseteq \mathbb{F}_{3}^{3}$ that are one-dimensional. For each $V$, it is clear that $V \subseteq V^{\perp}$ and one sees that these are all.
\end{proof}

We also need to consider the situation of anti-symmetric linear forms.
\begin{lemma}
\label{antisym}
Consider $\mathbb{F}_{3}^{3}$ and an anti-symmetric matrix $A \in \M_{3}(\mathbb{F}_{3})$ given by the following for some $(a,b) \in \mathbb{F}_{3}^{2} \setminus \{(0,0)\}$:
\begin{equation}
    A 
    =  \left(
    \begin{smallmatrix}
	\ &  a & b \\ 
	-a & & \\ 
	-b & & 
    \end{smallmatrix}
    \right) .
\end{equation}
Consider the bilinear form $\langle\ , \ \rangle_{A}$ given by 
\begin{equation}
    \langle x,y\rangle_{A}  = \langle A x, y \rangle.
\end{equation}

Then, the vector $v=(0,b,-a)^{T} \in \mathbb{F}_{3}^{3}$ is the unique vector up to scaling satisfying $\langle v ,w \rangle_{A} = 0$ for all $w \in \mathbb{F}_{3}^{3}$. Furthermore, there are exactly four subspaces $V \subseteq \mathbb{F}_{3}^{3}$ of $\mathbb{F}_{3}$-dimension 2 such that $\mathbb{F}_{3} \cdot v \subseteq V$ and $V^{\perp} = \{ w \in \mathbb{F}_{3}^{3} \mid \langle u,w \rangle_{A} = 0 \text{ for all }u \in V\} = V$.
\end{lemma}
\begin{proof}
Proof is in Appendix \ref{proofappendixc}.
\end{proof}

\subsection{Understanding the tree using finite fields}

\subsubsection{Pure vertices}

For any $\mathcal{O}_{\pi}$-lattice $\Lambda \subseteq F_{\pi}^{3}$, one has that $\Lambda / \pi \Lambda $ is a finite vector space over $\mathcal{O}_{\pi}/\pi \mathcal{O}_{\pi} \simeq \mathbb{F}_{3}$ of dimension $3$.
Let us consider a pure $0$-simplex $\{\pi^{i} \Lambda\}_{i \in \mathbb{Z}}$ for $\Lambda \subseteq F_{\pi}^{3}$ a self-dual lattice. 
One then gets the following ``$\langle \ , \ \rangle$ modulo $\pi$'' map:
\begin{equation}
    \langle \ , \ \rangle + \pi \mathcal{O}_{\pi} : \frac{\Lambda}{\pi \Lambda} \times \frac{\Lambda}{ \pi \Lambda } \to \frac{\mathcal{O}_{\pi}}{ \pi \mathcal{O}_{\pi}}.
    \label{eq:defi_of_angle}
\end{equation}
Indeed, this is well-defined since self-duality of $\Lambda$ implies $\langle \Lambda , \Lambda \rangle \subseteq \mathcal{O}_{\pi}$ whereas $\langle \Lambda , \pi \Lambda \rangle \subseteq \pi \mathcal{O}_{\pi}$.

\begin{lemma}
  \label{le:nondegen}
  Consider $\Lambda \subseteq F_{\pi}^{3}$ to be a self-dual lattice. Then the bilinear form $\langle \ , \ \rangle + \pi \mathcal{O}_{\pi}$ given in Equation \ref{eq:defi_of_angle} defined on the finite $\mathbb{F}_{3}$-vector space $\Lambda / \pi \Lambda$ is non-degenerate and symmetric.
\end{lemma}
\begin{proof}
    The symmetry is trivial to check since $x + \pi \mathcal{O}_{\pi} = \overline{x} + \pi\mathcal{O}_{\pi}$ for any $x \in \mathcal{O}$.
        What we want to then check is that for any $v \in \Lambda$, $\langle v ,w \rangle \in \pi \mathcal{O}_{\pi}$ for all $w \in \Lambda$ then $v \in \pi \Lambda$. But this is clear since we must then have $\langle \chi^{-1}  v , w \rangle \in \mathcal{O}_{\pi}$ for all $w \in \Lambda$ forcing that $\chi^{-1} v \in \dual{\Lambda} = \Lambda$.
\end{proof}

This makes us conclude the following fact about the geometry of the Bruhat-Tits tree $\mathcal{B}$.

\begin{proposition}
    Each pure vertex $x \in \mathcal{P}_{\mathcal{B}}$ is connected to $4$ alternating vertices in $A_{\mathcal{B}}$. 
\end{proposition}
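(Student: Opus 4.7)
The plan is to exhibit an explicit bijection between the alternating neighbors of the pure vertex $x=\{\pi^{i}\Lambda\}_{i\in\mathbb{Z}}$ and the isotropic lines in the $\mathbb{F}_{3}$-vector space $V=\Lambda/\pi\Lambda$, and then to count the latter using \cref{le:mod3_codes}.

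First I would use \cref{pr:simplex_classification} to identify each alternating neighbor of $x$ with a lattice $\Lambda_{1}$ satisfying $\Lambda\subsetneq\Lambda_{1}$, $\dual{\Lambda_{1}}\subsetneq\Lambda$, and $\pi\Lambda_{1}\subseteq\dual{\Lambda_{1}}$; requiring $\Lambda_{1}\supsetneq\Lambda$ (rather than taking its dual) pins down a unique representative of the alternating vertex up to $\pi$-equivalence. Then, using the natural $\mathcal{O}_{\pi}$-isomorphism $\phi:\pi^{-1}\Lambda/\Lambda\to V$ given by $v+\Lambda\mapsto\chi v+\pi\Lambda$, I would assign to each $\Lambda_{1}$ the subspace $L:=\phi(\Lambda_{1}/\Lambda)\subseteq V$.

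The heart of the argument consists of two checks. (i) \emph{$L$ has $\mathbb{F}_{3}$-dimension $1$.} The Hermitian pairing combined with self-duality of $\Lambda$ induces a non-degenerate $\mathbb{F}_{3}$-bilinear pairing $\Lambda/\dual{\Lambda_{1}}\times\Lambda_{1}/\Lambda\to\pi^{-1}\mathcal{O}_{\pi}/\mathcal{O}_{\pi}\simeq\mathbb{F}_{3}$, giving $d:=\dim_{\mathbb{F}_{3}}(\Lambda_{1}/\Lambda)=\dim_{\mathbb{F}_{3}}(\Lambda/\dual{\Lambda_{1}})$. Decomposing $\Lambda_{1}/\pi\Lambda_{1}\simeq\mathbb{F}_{3}^{3}$ through the chain $\pi\Lambda_{1}\subseteq\dual{\Lambda_{1}}\subsetneq\Lambda\subsetneq\Lambda_{1}$ yields $e+2d=3$ with $e:=\dim(\dual{\Lambda_{1}}/\pi\Lambda_{1})\geq 0$ and $d\geq 1$, forcing $d=e=1$. (ii) \emph{$L$ is isotropic.} Unwinding $\chi\Lambda_{1}\subseteq\dual{\Lambda_{1}}$ gives $\chi\langle u,u\rangle\in\mathcal{O}_{\pi}$ for all $u\in\Lambda_{1}$, and then using $\chi\overline{\chi}=3\in\chi\mathcal{O}_{\pi}$ shows $\langle\chi u,\chi u\rangle\in\chi\mathcal{O}_{\pi}$, which is precisely $\langle\phi(u+\Lambda),\phi(u+\Lambda)\rangle=0$ in $V$ for the symmetric form from \cref{le:nondegen}.

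For the converse, given an isotropic line $L\subseteq V$, I would take $\Lambda_{1}$ to be the preimage of $\phi^{-1}(L)$ in $\pi^{-1}\Lambda$. A direct unwinding of definitions then shows that $\dual{\Lambda_{1}}$ is the preimage of $L^{\perp}\subseteq V$ in $\Lambda$, and the isotropy $L\subseteq L^{\perp}$ immediately yields $\chi\Lambda_{1}\subseteq\dual{\Lambda_{1}}$. Hence $\Lambda_{1}\mapsto L$ is a bijection between alternating neighbors of $x$ and isotropic lines in $V$. By \cref{le:mod3_codes} applied to the standard non-degenerate symmetric form on $\mathbb{F}_{3}^{3}$, there are exactly $4$ such lines, so $x$ has exactly $4$ alternating neighbors. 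The main subtlety lies in step (i), where the two constraints (duality-induced equality of dimensions and the chain condition) must be combined carefully to force $d=1$; avoiding double-counting of the two natural representatives $\Lambda_{1}$ and $\dual{\Lambda_{1}}$ of the same alternating vertex is resolved by the fixed convention $\Lambda_{1}\supsetneq\Lambda$.
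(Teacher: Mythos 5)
Your proof is correct and takes essentially the same approach as the paper's: both identify the alternating neighbors of the pure vertex $x$ with the one-dimensional isotropic subspaces of the reduced Hermitian form on $\Lambda/\pi\Lambda\simeq\mathbb{F}_{3}^{3}$ and then count them via \cref{le:mod3_codes}. The only difference is that where the paper's proof simply says ``one can check'' the bijection with chains $\{0\}\subset V\subset V^{\perp}\subset\mathbb{F}_{3}^{3}$, you supply the details --- in particular the duality-pairing dimension count in step (i) --- giving a fuller write-up of the same argument.
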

\begin{proof}
    Let $x = \{\pi^{i} \Lambda \}_{i \in \mathbb{Z}}$ such that $\Lambda \subseteq {F}_{\pi}^{3}$ is a self-dual lattice. 
    The goal is to find all possible lattices $\Lambda_{1} \subseteq F_{\pi}^{3}$ such that 
\begin{equation}
    \dots \subset \pi \Lambda \subset \pi \Lambda_{1} \subset \dual{\Lambda_{1}} \subset \Lambda \subset \Lambda_{1} \subset \pi^{-1} \dual{\Lambda_{1}} \subset \pi^{-1} \Lambda \subset \dots
\end{equation}

Fix a vector space isomorphism $\frac{\Lambda}{\pi \Lambda} \simeq \mathbb{F}_{3}^{3}$. 
Then, we know that the bilinear form $\langle \ ,\ \rangle + \pi \mathcal{O}_{\pi}$
defines a quadratic form on $\mathbb{F}_{3}^{3}$. 
This form is non-degenerate and symmetric so we can invoke Lemma \ref{le:mod3_codes}.

One can check that each chain $\pi \Lambda \subset \pi \Lambda_{1} \subset \dual{\Lambda_{1}} \subset \Lambda$ corresponds bijectively to a chain of subspaces $\{0\} \subset V \subset V^{\perp} \subset \mathbb{F}_{3}^3$ where $V$ has $\mathbb{F}_{3}$-dimension equal to 1 and $V^{\perp}$ is the dual of $V$ with respect to $\langle\ , \ \rangle + \pi \mathcal{O}_{\pi}$. 
The number of such $V$ is exactly $4$ as shown in Lemma \ref{le:mod3_codes}.

\end{proof}

\begin{figure}
    \centering
    \begin{tikzpicture}[scale=0.6,
        dot/.style={circle, fill=black, inner sep=3pt},
        lababove/.style={font=\footnotesize, align=center, above=18pt},
        lableft/.style={font=\footnotesize, align=right, left=7pt},
        labright/.style={font=\footnotesize, align=left, right=7pt}
    ]

    \node[dot] (pure) at (0,0) {};
    \node[lababove] at (pure.north) {Pure vertex\\ $\Lambda/\pi\Lambda \cong \mathbb{F}_3^3$};

    \node[dot] (alt1) at ( 2,  2) {};
    \node[dot] (alt2) at ( 2, -2) {};
    \node[dot] (alt3) at (-2, -2) {};
    \node[dot] (alt4) at (-2,  2) {};

    \node[labright] at (alt1.east) {Alternating vertex\\ $V_1 \subset V_1^\perp$};
    \node[labright] at (alt2.east) {Alternating vertex\\ $V_2 \subset V_2^\perp$};

    \node[lableft] at (alt3.west) {Alternating vertex\\ $V_3 \subset V_3^\perp$};
    \node[lableft] at (alt4.west) {Alternating vertex\\ $V_4 \subset V_4^\perp$};

    \draw[line width=1.5pt] (pure) -- (alt1);
    \draw[line width=1.5pt] (pure) -- (alt2);
    \draw[line width=1.5pt] (pure) -- (alt3);
    \draw[line width=1.5pt] (pure) -- (alt4);

    \end{tikzpicture}
    \caption{A pure vertex connected to four alternating vertices via the four isotropic lines $V \subset V^\perp$ in $\mathbb{F}_3^3$.}
    \label{fig:purevertex_x}
\end{figure}

    \subsubsection{Alternating vertices}

    We will now start with an alternating $0$-simplex $v = \{ \pi^{i}\Lambda, \pi^{i}\dual{\Lambda} \}_{i \in \mathbb{Z}}$ such that 
    \begin{equation}
	\label{eq:chain_alternating}
	\cdots   \subset \pi \Lambda \subset  \dual{\Lambda} \subset  \Lambda \subset   \pi^{-1} \dual{\Lambda} \subset \cdots 
    \end{equation}
    The goal is now to find how many self-dual lattices $\Lambda_{1} \subseteq F_{\pi}^{3}$ one can find, if any, such that 
    \begin{equation}
	\cdots \subset \pi \Lambda_{1}   \subset \pi \Lambda \subset  \dual{\Lambda} \subset \Lambda_{1}  \subset  \Lambda \subset   \pi^{-1} \dual{\Lambda} \subset \pi^{-1} \Lambda_{1} \subset\cdots 
	\label{eq:chain_sandiwich}
    \end{equation}

    The image of $\dual{\Lambda}$ in the $\mathbb{F}_{3}$-vector space $\Lambda/ \pi \Lambda$ must be a $1$-dimensional vector space. 
    One can indeed check this by observing that if $ \Lambda = g \mathcal{O}_{\pi}^3$ then $[\Lambda:\dual{\Lambda}] = 3^{-2v_{\pi}(\det g)}$ is an even power of $3$ (as $N(\det g)=\det(g)\overline{\det(g)}$ has even $\pi$-valuation) dividing $[\Lambda:\pi\Lambda]=27$; since $\Lambda$ is not self-dual it is forced to be $9$.

    Hence, we can consider $\Lambda / \pi {\Lambda} \simeq \mathbb{F}_{3}^{3}$. 
    The relevant bilinear form on this $\mathbb{F}_{3}$-space is 
    \begin{equation}
	\label{eq:defi_of_angle4}
	\langle \ , \ \rangle + \mathcal{O}_{\pi}
	: \frac{\Lambda}{\pi {\Lambda} } \times
	\frac{\Lambda}{\pi {\Lambda}} 
	\rightarrow  \frac{\pi^{-1} \mathcal{O}_{\pi}}{ \mathcal{O}_{\pi}} \simeq \mathbb{F}_{3}. 
    \end{equation}
    One can check that this is indeed well-defined. 
    First observe that $\langle \Lambda , \Lambda \rangle \subseteq \pi^{-1} \langle \Lambda , \dual{\Lambda} \rangle \subseteq \pi^{-1} \mathcal{O}_{\pi}$ makes sense whereas $\langle \dual{\Lambda}, \Lambda \rangle \subseteq \mathcal{O}_{\pi}$ is in the kernel.

    To have an $\mathcal{O}_{\pi}$-valued form on $\Lambda/\pi\Lambda$, one can consider the form 
	$\chi \langle \ , \ \rangle + \pi \mathcal{O}_{\pi} $ instead. The next two lemmas are about this bilinear form.

	\begin{lemma}
	  \label{le:techincal_lemma}
	  Let $x,y \in F_{\pi}$ be such that $\overline{x} x + \overline{y} y \in \mathcal{O}_{\pi}$. Then, $x,y \in \mathcal{O}_{\pi}$.
	\end{lemma}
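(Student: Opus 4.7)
The plan is to argue by contradiction, using the $\pi$-adic valuation together with the fact that the residue field $\mathcal{O}_\pi/\pi\mathcal{O}_\pi \simeq \mathbb{F}_3$ has only $\pm 1$ as units and that complex conjugation acts trivially on this residue field (as noted in the Galois discussion above).

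Suppose at least one of $x,y$ has negative valuation. Without loss of generality, let $k = \min(v_\pi(x), v_\pi(y)) < 0$, and by symmetry take $v_\pi(x) = k$. Since conjugation preserves $\pi$ (it fixes the prime ideal), one has $v_\pi(\overline{x}) = v_\pi(x)$, so $v_\pi(\overline{x}x) = 2v_\pi(x)$ and $v_\pi(\overline{y}y) = 2v_\pi(y)$. If the two valuations $v_\pi(x)$ and $v_\pi(y)$ differ, the ultrametric inequality gives $v_\pi(\overline{x}x + \overline{y}y) = 2k < 0$, contradicting $\overline{x}x + \overline{y}y \in \mathcal{O}_\pi$. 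So I may assume $v_\pi(x) = v_\pi(y) = k$ and the only concern is that the leading $\chi^{2k}$-terms cancel.

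To rule this out, write $x = \chi^k u$ and $y = \chi^k v$ with $u,v \in \mathcal{O}_\pi^{\times}$. Using $\overline{\chi}\chi = 3$ (equivalently that $\overline{\chi}/\chi$ is a unit), factor
\begin{equation}
\overline{x}x + \overline{y}y = (\overline{\chi}\chi)^{k}(\overline{u}u + \overline{v}v) = 3^{k}(\overline{u}u + \overline{v}v).
\end{equation}
Now reduce $\overline{u}u + \overline{v}v$ modulo $\pi$: since the Galois automorphism descends to the identity on $\mathbb{F}_3 \simeq \mathcal{O}_\pi/\pi\mathcal{O}_\pi$, we have $\overline{u}u \equiv u^{2}$ and $\overline{v}v \equiv v^{2} \pmod{\pi}$. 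The images $\bar{u},\bar{v} \in \mathbb{F}_3^{\times} = \{\pm 1\}$ both square to $1$, so
\begin{equation}
\overline{u}u + \overline{v}v \equiv 1 + 1 \equiv -1 \pmod{\pi}.
\end{equation}
In particular $\overline{u}u + \overline{v}v$ is a unit in $\mathcal{O}_\pi$, so $v_\pi(\overline{x}x + \overline{y}y) = 2k < 0$, again contradicting $\overline{x}x + \overline{y}y \in \mathcal{O}_\pi$. Hence $k \geq 0$, so $x,y \in \mathcal{O}_\pi$.

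The only substantive step is the residue-field computation, which rests on two features of this specific setting: that $F_\pi/\mathbb{Q}_3$ is totally ramified (so the residue field is just $\mathbb{F}_3$ and conjugation is trivial there) and that $-1$ is not a sum of two squares of units in $\mathbb{F}_3$ equal to zero — concretely $1+1 \ne 0$ in $\mathbb{F}_3$. This is the only place any nontrivial arithmetic enters; everything else is bookkeeping with $v_\pi$.
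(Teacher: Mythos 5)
Your proof is correct and follows essentially the same two-case argument as the paper: handle the unequal-valuation case via the ultrametric inequality, and in the equal-valuation case reduce to units mod $\pi$ and use that conjugation acts trivially on $\mathbb{F}_3$ together with $1+1 \neq 0$ in $\mathbb{F}_3$ to show the leading terms cannot cancel. The paper phrases the residue computation as $\overline{x'}x' + \overline{y'}y' \in 2 + \pi\mathcal{O}_\pi$ rather than $\equiv -1$, but this is the same observation, and your closing remark correctly isolates the one genuinely arithmetic ingredient (that the residue characteristic is odd).
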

    \begin{proof}
    The proof can be found in Appendix \ref{proofoflemma3.16}
    \end{proof}


\begin{lemma}
  \label{le:nondegen2}
  Consider the form $\langle \ , \ \rangle + \mathcal{O}_{\pi}$ defined in 
  Equation \ref{eq:defi_of_angle4}. Then, after identification of $\Lambda / \pi \Lambda \simeq \mathbb{F}_{3}^{3}$, the map $\chi \langle \ , \ \rangle + \pi \mathcal{O}_{\pi}$ is anti-symmetric and is equivalent to the bilinear map $\mathbb{F}_{3}^{3} \times \mathbb{F}_{3}^{3} \rightarrow  \mathbb{F}_{3}$ considered in Lemma \ref{antisym}.
\end{lemma}
\begin{proof}
The proof can be found in Appendix \ref{proofofprop5.19}.
\end{proof}
Furthermore we have the following proposition:
\begin{proposition}
    \label{pr:alternatings_has_pure_neighbours}
    Each alternating vertex $v \in \mathcal{A}_{\mathcal{B}}$ is connected to $4$ pure vertices in $\mathcal{P}_{\mathcal{B}}$.
\end{proposition}
\begin{proof}

Let $v = \{ \pi^{i}\Lambda, \pi^{i} \dual{\Lambda}\}_{i \in \mathbb{Z}}$ where $\Lambda \subseteq F_{\pi}^{3}$ is a lattice satisfying Equation \ref{eq:chain_alternating}. We know from Lemma \ref{le:nondegen2} that the bilinear form $\chi \langle \ , \ \rangle + \pi \mathcal{O}_{\pi}$ on $\Lambda/\pi \Lambda \simeq \mathbb{F}_{3}^{3}$ must be isomorphic to the one in Lemma \ref{antisym}, up to some linear transformations on $\mathbb{F}_{3}^3$. The subspace $\mathbb{F}_{3} \cdot v $ for the vector $v= (0,b,-a)^{T} \in \mathbb{F}_{3}^{3}$ must correspond to the image of $\dual{\Lambda}$ in $\Lambda/\pi \Lambda$. The four self-dual subspaces $V$ in Lemma \ref{antisym} therefore lift to four self-dual lattices $\Lambda_{1}$ satisfying Equation \ref{eq:chain_sandiwich}.

\end{proof}

\subsubsection{The tree is connected}

We will now show that the tree $\mathcal{B}$ is a connected tree. First, observe the easy lemma.
\begin{lemma}
  \label{le:gramm_matrix}
  Let $ \Lambda =  v_{1}\mathcal{O}_{\pi}+ v_{2} \mathcal{O}_{\pi} + v_{3}\mathcal{O}_{\pi} \subseteq F_{\pi}^{3}$ be a lattice. Then, $\Lambda$ is self-dual if and only if the Gram matrix $[ \langle v_{i}, v_{j} \rangle]_{1 \leq i ,j \leq 3} \in \GL_{3}(\mathcal{O}_{\pi})$.
\end{lemma}
\begin{proof}
If $A \in \GL_3(F_\pi)$ is a matrix with columns being  $v_1,v_2,v_3$, we know by the discussion in Section \ref{se:dual_lattice}
that the matrix $(A^*)^{-1}$ contains a basis of 
$\dual{\Lambda} = \Lambda$. 
Hence, for some $\gamma \in \GL_3(\mathcal{O}_\pi)$, 
one has $(A^*)^{-1} = A \gamma$.  
This implies that $A^* A \in \GL_3(\mathcal{O}_\pi)$ and it can be checked that $A^* A$ is exactly the Gram matrix.
Conversely, if $A^* A \in \GL_3(\mathcal{O}_\pi)$ then $(A^*)^{-1} = A (A^* A)^{-1} \in A\,\GL_3(\mathcal{O}_\pi)$, so $\dual{\Lambda} = (A^*)^{-1}\mathcal{O}_\pi^3 = A \mathcal{O}_\pi^3 = \Lambda$.
\end{proof}

We will show the following key proposition towards connectedness. 
\begin{proposition}
\label{pr:chain}
    Let $g,h \in \GL_{3}(F_{\pi})$ be such that $\Lambda_{g} = g \mathcal{O}_{\pi}^{3}$ and $\Lambda_{h} =h \mathcal{O}_{\pi}^{3}$ are self-dual lattices. Furthermore, we assume that 
    \begin{equation}
	\dots \subseteq \pi^{n}\Lambda_{g} \subseteq \Lambda_{h} \subseteq \pi^{-n}\Lambda_{g} \subseteq  \pi^{-2n} \Lambda_{h} \subseteq \dots,
	\label{eq:non-lattice-chain}
    \end{equation}
    where $n \in \mathbb{Z}_{\geq 0}$ is the least such $n$ that makes Equation \ref{eq:non-lattice-chain} hold.
    Then, there exists a sequence of self-dual lattices $\Lambda_{1}, \Lambda_{2} ,\dots, \Lambda_{n-1} \subseteq F_{\pi}^{3}$ such that 
    \begin{align}
	\label{eq:non-lattice-chain2}
	\dots \subseteq \pi \Lambda_{1} \subseteq &  \Lambda_{g} \subseteq  \pi^{-1} \Lambda_{1} \subseteq \pi^{-2}\Lambda_{g} \subseteq \dots \\
	 \dots \subseteq \pi \Lambda_{2} \subseteq &  \Lambda_{1} \subseteq  \pi^{-1} \Lambda_{2} \subseteq \pi^{-2}\Lambda_{1} \subseteq \dots \\
						   & \vdots \\
	 \dots \subseteq \pi \Lambda_{n-1} \subseteq &  \Lambda_{h} \subseteq  \pi^{-1} \Lambda_{n-1} \subseteq \pi^{-2}\Lambda_{h} \subseteq \dots \\
    \end{align}
\end{proposition}
\begin{proof}
The proof can be found in Appendix \ref{proofofprop5.22}.
\end{proof}
\begin{corollary}
The tree $\mathcal{B}$ is connected.
\end{corollary}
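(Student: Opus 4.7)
The plan is to show that any two vertices of $\mathcal{B}$ can be joined by a path of edges. Because every alternating vertex has a pure neighbour by \cref{pr:alternatings_has_pure_neighbours}, it suffices to connect any two \emph{pure} vertices, say $v_g = \{\pi^i \Lambda_g\}_{i \in \mathbb{Z}}$ and $v_h = \{\pi^i \Lambda_h\}_{i \in \mathbb{Z}}$ for self-dual lattices $\Lambda_g,\Lambda_h \subseteq F_\pi^3$. After possibly replacing $\Lambda_h$ by a $\pi$-multiple (which leaves $v_h$ unchanged), \cref{th:cartan} applied to the change-of-basis matrix between the two lattices yields the minimal $n \in \mathbb{Z}_{\geq 0}$ with $\pi^n \Lambda_g \subseteq \Lambda_h \subseteq \pi^{-n}\Lambda_g$. \cref{pr:chain} then produces self-dual lattices $\Lambda_g = L_0, L_1, \dots, L_{n-1}, L_n = \Lambda_h$ such that consecutive pairs satisfy $\pi L_{i+1} \subseteq L_i \subseteq \pi^{-1}L_{i+1}$.

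It thus suffices to exhibit, for each consecutive pair $(L_i, L_{i+1})$, a length-two edge path in $\mathcal{B}$ between the pure vertices $\{\pi^j L_i\}_j$ and $\{\pi^j L_{i+1}\}_j$. The bridge I would introduce is the alternating vertex attached to $\Lambda^{\star} := L_i + L_{i+1}$. Using the general duality identity $(\Lambda + \Lambda')^{\sharp} = \Lambda^{\sharp} \cap (\Lambda')^{\sharp}$ and self-duality of both summands, one gets $\dual{\Lambda^{\star}} = L_i \cap L_{i+1}$. Because $L_i \neq L_{i+1}$ (see the next paragraph), the containments
\begin{equation}
  L_i \cap L_{i+1} \;\subsetneq\; L_i \;\subsetneq\; L_i + L_{i+1}, \qquad L_i \cap L_{i+1} \;\subsetneq\; L_{i+1} \;\subsetneq\; L_i + L_{i+1}
\end{equation}
are strict, so \cref{pr:simplex_classification}(2) identifies each of the resulting rank-$3$ self-dual chains as a genuine $1$-simplex. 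This yields two edges connecting the alternating vertex $\{\pi^j \Lambda^{\star}, \pi^j \dual{\Lambda^{\star}}\}_j$ to $\{\pi^j L_i\}_j$ and to $\{\pi^j L_{i+1}\}_j$ respectively.

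The main point to verify is the strictness $L_i \neq L_{i+1}$, which is where I would lean on the structural output of \cref{pr:chain}. In its proof the intermediate lattices are given explicitly as $L_i = \chi^i v_1 \mathcal{O}_\pi + v_2 \mathcal{O}_\pi + \chi^{-i} v_3 \mathcal{O}_\pi$ for a fixed basis $(v_1,v_2,v_3)$, and their distinctness for distinct $i$ is immediate from the linear independence of the $v_k$. Consequently $\dual{\Lambda^{\star}} \subsetneq \Lambda^{\star}$, confirming $\Lambda^{\star}$ defines a bona fide alternating vertex. Concatenating the $n$ length-two paths between consecutive $L_i$'s produces a path of length $2n$ from $v_g$ to $v_h$ in $\mathcal{B}$, establishing connectedness. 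The only genuine subtlety I anticipate is keeping track of the correct $\pi$-scalings when identifying the constructed chain with an element of the simplex classification; this is bookkeeping but should reduce cleanly to the strict-inclusion check above.
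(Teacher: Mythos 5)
Your proof is correct and takes essentially the same approach as the paper: reduce to pure vertices via \cref{pr:alternatings_has_pure_neighbours}, use \cref{pr:chain} to produce the intermediate self-dual lattices, and bridge each consecutive pair through an alternating vertex. You make the bridging step explicit via $\Lambda^{\star} = L_i + L_{i+1}$ with $\dual{\Lambda^{\star}} = L_i \cap L_{i+1}$, which is exactly the construction the paper carries out in the proposition immediately following \cref{le:hecke_nghbs}, so you have simply inlined what the paper's corollary proof defers to that lemma.
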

\begin{proof}
    It is true that for any self-dual lattices $\Lambda_{g}$, $\Lambda_{h} \in \mathcal{P}_{\mathcal{B}}$ one can find some minimal $n \in \mathbb{Z}_{\geq 0}$ such that 
   Equation \ref{eq:non-lattice-chain} is true. In Lemma \ref{le:hecke_nghbs}, we will see that if $n=1$, one can connect such pure vertices by an alternating pair. If $n>1$, we can reduce to the case of $n=1$ by Proposition \ref{pr:chain}.

   To connect $x,y \in \mathcal{A}_{\mathcal{B}}$, one can use Proposition \ref{pr:alternatings_has_pure_neighbours} to connect them to pure neighbors and assume $x,y \in \mathcal{P}_{\mathcal{B}}$ without loss of generality.
\end{proof}
                  We have plotted in Figure \ref{fig:Btbuilding2}(a) the appearance of the tree with the structure described above.

\section{Clifford+R action on the Bruhat-Tits building}

Observe that $\U_{3}(F_{\pi})$ has a natural action on $\mathcal{B}$. 
The stabilizer of the origin $e_{\mathcal{B}}$ is $\U_{3}(\mathcal{O}_{\pi})$. 

\begin{figure}
    \centering
    \begin{subfigure}
    {0.48\textwidth}
    \hspace{2cm}
    \label{fig:Bt building}
\begin{tikzpicture}[scale=0.12, every node/.style={circle, draw, inner sep=1pt}]
  \node[fill=red] (n0) at (0.000,0.000) {};
  \node[fill=blue] (n1) at (4.243,-4.243) {};
  \node[fill=blue] (n2) at (4.243,4.243) {};
  \node[fill=blue] (n3) at (-4.243,4.243) {};
  \node[fill=blue] (n4) at (-4.243,-4.243) {};
  \node[fill=red] (n5) at (3.106,-11.591) {};
  \node[fill=red] (n6) at (8.485,-8.485) {};
  \node[fill=red] (n7) at (11.591,-3.106) {};
  \node[fill=red] (n8) at (11.591,3.106) {};
  \node[fill=red] (n9) at (8.485,8.485) {};
  \node[fill=red] (n10) at (3.106,11.591) {};
  \node[fill=red] (n11) at (-3.106,11.591) {};
  \node[fill=red] (n12) at (-8.485,8.485) {};
  \node[fill=red] (n13) at (-11.591,3.106) {};
  \node[fill=red] (n14) at (-11.591,-3.106) {};
  \node[fill=red] (n15) at (-8.485,-8.485) {};
  \node[fill=red] (n16) at (-3.106,-11.591) {};
  \node[fill=blue] (n17) at (1.656,-18.928) {};
  \node[fill=blue] (n18) at (4.918,-18.353) {};
  \node[fill=blue] (n19) at (8.030,-17.220) {};
  \node[fill=blue] (n20) at (10.898,-15.564) {};
  \node[fill=blue] (n21) at (13.435,-13.435) {};
  \node[fill=blue] (n22) at (15.564,-10.898) {};
  \node[fill=blue] (n23) at (17.220,-8.030) {};
  \node[fill=blue] (n24) at (18.353,-4.918) {};
  \node[fill=blue] (n25) at (18.928,-1.656) {};
  \node[fill=blue] (n26) at (18.928,1.656) {};
  \node[fill=blue] (n27) at (18.353,4.918) {};
  \node[fill=blue] (n28) at (17.220,8.030) {};
  \node[fill=blue] (n29) at (15.564,10.898) {};
  \node[fill=blue] (n30) at (13.435,13.435) {};
  \node[fill=blue] (n31) at (10.898,15.564) {};
  \node[fill=blue] (n32) at (8.030,17.220) {};
  \node[fill=blue] (n33) at (4.918,18.353) {};
  \node[fill=blue] (n34) at (1.656,18.928) {};
  \node[fill=blue] (n35) at (-1.656,18.928) {};
  \node[fill=blue] (n36) at (-4.918,18.353) {};
  \node[fill=blue] (n37) at (-8.030,17.220) {};
  \node[fill=blue] (n38) at (-10.898,15.564) {};
  \node[fill=blue] (n39) at (-13.435,13.435) {};
  \node[fill=blue] (n40) at (-15.564,10.898) {};
  \node[fill=blue] (n41) at (-17.220,8.030) {};
  \node[fill=blue] (n42) at (-18.353,4.918) {};
  \node[fill=blue] (n43) at (-18.928,1.656) {};
  \node[fill=blue] (n44) at (-18.928,-1.656) {};
  \node[fill=blue] (n45) at (-18.353,-4.918) {};
  \node[fill=blue] (n46) at (-17.220,-8.030) {};
  \node[fill=blue] (n47) at (-15.564,-10.898) {};
  \node[fill=blue] (n48) at (-13.435,-13.435) {};
  \node[fill=blue] (n49) at (-10.898,-15.564) {};
  \node[fill=blue] (n50) at (-8.030,-17.220) {};
  \node[fill=blue] (n51) at (-4.918,-18.353) {};
  \node[fill=blue] (n52) at (-1.656,-18.928) {};
  \draw (n0) -- (n1);
  \draw (n0) -- (n2);
  \draw (n0) -- (n3);
  \draw (n0) -- (n4);
  \draw (n1) -- (n5);
  \draw (n1) -- (n6);
  \draw (n1) -- (n7);
  \draw (n2) -- (n8);
  \draw (n2) -- (n9);
  \draw (n2) -- (n10);
  \draw (n3) -- (n11);
  \draw (n3) -- (n12);
  \draw (n3) -- (n13);
  \draw (n4) -- (n14);
  \draw (n4) -- (n15);
  \draw (n4) -- (n16);
  \draw (n5) -- (n17);
  \draw (n5) -- (n18);
  \draw (n5) -- (n19);
  \draw (n6) -- (n20);
  \draw (n6) -- (n21);
  \draw (n6) -- (n22);
  \draw (n7) -- (n23);
  \draw (n7) -- (n24);
  \draw (n7) -- (n25);
  \draw (n8) -- (n26);
  \draw (n8) -- (n27);
  \draw (n8) -- (n28);
  \draw (n9) -- (n29);
  \draw (n9) -- (n30);
  \draw (n9) -- (n31);
  \draw (n10) -- (n32);
  \draw (n10) -- (n33);
  \draw (n10) -- (n34);
  \draw (n11) -- (n35);
  \draw (n11) -- (n36);
  \draw (n11) -- (n37);
  \draw (n12) -- (n38);
  \draw (n12) -- (n39);
  \draw (n12) -- (n40);
  \draw (n13) -- (n41);
  \draw (n13) -- (n42);
  \draw (n13) -- (n43);
  \draw (n14) -- (n44);
  \draw (n14) -- (n45);
  \draw (n14) -- (n46);
  \draw (n15) -- (n47);
  \draw (n15) -- (n48);
  \draw (n15) -- (n49);
  \draw (n16) -- (n50);
  \draw (n16) -- (n51);
  \draw (n16) -- (n52);
\end{tikzpicture}
\caption{Vertices up to distance 3 from the center of $\mathcal{B}$. Note that the pure vertices are denoted by red and the alternating ones by blue.}
\end{subfigure}
\hspace{0.5cm}
\begin{subfigure}{0.40\textwidth}
\centering
    \centering
\begin{tikzpicture}[scale=0.14, every node/.style={circle, draw, inner sep=1.5pt}]
  \node[fill=red] (n0) at (0.000,0.000) {$e_0$};
  \node[fill=blue] (n1) at (4.243,-4.243) {};
  \node[fill=blue] (n2) at (4.243,4.243) {};
  \node[fill=blue] (n3) at (-4.243,4.243) {};
  \node[fill=blue] (n4) at (-4.243,-4.243) {};
  \node[fill=green] (n5) at (3.106,-11.591) {$e$};
  \node[fill=red] (n6) at (8.485,-8.485) {};
  \node[fill=red] (n7) at (11.591,-3.106) {};
  \node[fill=pink] (n8) at (11.591,3.106) {$e_1$};
  \node[fill=red] (n9) at (8.485,8.485) {};
  \node[fill=red] (n10) at (3.106,11.591) {};
  \node[fill=red] (n11) at (-3.106,11.591) {};
  \node[fill=red] (n12) at (-8.485,8.485) {};
  \node[fill=red] (n13) at (-11.591,3.106) {};
  \node[fill=red] (n14) at (-11.591,-3.106) {};
  \node[fill=red] (n15) at (-8.485,-8.485) {};
  \node[fill=red] (n16) at (-3.106,-11.591) {};
  \node[fill=red] (nv) at (6.212,-23.182) {$v$};
  \draw[dash pattern=on 3pt off 3pt] (n0) -- (n1);
  \draw (n0) -- (n2);
  \draw (n0) -- (n3);
  \draw (n0) -- (n4);
  \draw[dash pattern=on 3pt off 3pt] (n1) -- (n5);
  \draw (n1) -- (n6);
  \draw (n1) -- (n7);
  \draw (n2) -- (n8);
  \draw (n2) -- (n9);
  \draw (n2) -- (n10);
  \draw (n3) -- (n11);
  \draw (n3) -- (n12);
  \draw (n3) -- (n13);
  \draw (n4) -- (n14);
  \draw (n4) -- (n15);
  \draw (n4) -- (n16);
  \draw[dashed, decorate, decoration={zigzag,amplitude=2mm,segment length=8mm}] (n5) -- (nv);
\end{tikzpicture}
\caption{The distance-2 shell $S_{0}$, as well as the dotted path from $e_0$ to $v$. }
\end{subfigure}
\caption{Traversing the tree}
\label{fig:Btbuilding2}
\end{figure}
\subsection{Distances in the building}

The following lemma describes the set of relevant matrices that we need to focus on to get all vertices $\mathcal{P}_{\mathcal{B}} \subseteq \mathcal{B}$.
\begin{lemma}
  \label{le:Aset}
  Define $\mathcal{A} \subseteq \GL_{3}(F_{\pi})$ as 
  \begin{equation}
      \mathcal{A} =\{\alpha \in \GL_{3}(F_{\pi})  \mid  \alpha^{*}\alpha \in \GL_{3}(\mathcal{O}_{\pi})\}.
  \end{equation}
  Then, for each $v \in \mathcal{P}_\mathcal{B}$ one has $v = \alpha v_{0}$ for some $\alpha \in \mathcal{A}$ where $v_{0} \in \mathcal{B}$ is the origin.
\end{lemma}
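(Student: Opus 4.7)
The plan is to directly exhibit the matrix $\alpha$ from the lattice data of the pure vertex $v$, and then verify the condition $\alpha^*\alpha \in \GL_3(\mathcal{O}_\pi)$ using the Gram matrix criterion.

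First, by \cref{pr:simplex_classification}, a pure vertex $v \in \mathcal{P}_\mathcal{B}$ is of the form $v = \{\pi^i \Lambda\}_{i \in \mathbb{Z}}$ for some self-dual lattice $\Lambda \subseteq F_\pi^3$. Since $\mathcal{O}_\pi$ is a PID, the structure theorem for finitely generated modules over a PID gives a basis $v_1, v_2, v_3 \in F_\pi^3$ such that
\begin{equation}
    \Lambda = \mathcal{O}_\pi v_1 \oplus \mathcal{O}_\pi v_2 \oplus \mathcal{O}_\pi v_3.
\end{equation}
Form the matrix $\alpha = [v_1\,v_2\,v_3] \in \GL_3(F_\pi)$, so that $\Lambda = \alpha \mathcal{O}_\pi^3$ and therefore $\alpha \cdot \pi^i \mathcal{O}_\pi^3 = \pi^i \Lambda$ for every $i \in \mathbb{Z}$. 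Hence $\alpha \cdot v_0 = v$ under the natural action of $\GL_3(F_\pi)$ on $\mathcal{B}$.

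Next, I would verify $\alpha \in \mathcal{A}$ by applying \cref{le:gramm_matrix}: the lattice $\Lambda$ is self-dual if and only if the Gram matrix $[\langle v_i, v_j\rangle]_{1 \le i,j \le 3}$ lies in $\GL_3(\mathcal{O}_\pi)$. An immediate computation identifies this Gram matrix with $\alpha^* \alpha$. Thus self-duality of $\Lambda$ is equivalent to $\alpha^*\alpha \in \GL_3(\mathcal{O}_\pi)$, which is precisely the defining condition for membership in $\mathcal{A}$.

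There is no serious obstacle here: the statement is essentially a repackaging of the PID structure theorem together with the Gram matrix characterization of self-duality, both of which are established earlier in the excerpt. The only point that requires a little care is making sure one works with $\alpha \cdot v_0$ as a lattice chain (not a single lattice), but since the chain $v_0 = \{\pi^i \mathcal{O}_\pi^3\}$ is the $\pi$-orbit of a single self-dual lattice, the action of $\alpha$ on the chain reduces at each level to the action on the single lattice, and the whole verification is immediate.
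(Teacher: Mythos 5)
Your proof is correct and takes essentially the same route as the paper's: both write $v=\{\pi^i\Lambda\}$ for a self-dual $\Lambda=\alpha\mathcal{O}_\pi^3$, then translate self-duality into $\alpha^*\alpha\in\GL_3(\mathcal{O}_\pi)$; the only cosmetic difference is that you invoke \cref{le:gramm_matrix} by name while the paper inlines the same dual-lattice computation directly.
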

\begin{proof}
    It is clear that $v = \{\pi^{i}\Lambda\}_{i \in \mathbb{Z}}$ for some self-dual lattice $\Lambda$. We write $\Lambda = g \mathcal{O}_{\pi}^{3}$ for some $g \in \GL_{3}(F_{\pi})$. Then, the self-duality condition implies that 
    $(g^{*})^{-1} \mathcal{O}_{\pi}^{3} = g \mathcal{O}_{\pi}^{3}$. This means that $g ^{*} g \mathcal{O}_{\pi}^{3} = \mathcal{O}_{\pi}^{3}$.
\end{proof}

Here's something important to know about elements of $\mathcal{A} \subseteq \GL_{3}(F_{\pi})$.
\begin{lemma}
  \label{le:diagonals}
    Write the Cartan decomposition of $g$ as $g = k a k'$ for $k,k' \in \GL_{3}(\mathcal{O}_{\pi})$ and some diagonal $a \in \GL_{3}(F_{\pi})$. Then, we can assume that $a = \diag(\chi^{n},1,\chi^{-n})$ for some $n \in \mathbb{Z}_{\geq 0}$.

\end{lemma}
\begin{proof}
    Since for some $\gamma \in \GL_{3}(\mathcal{O}_{\pi})$, we get $ g^{*} = \gamma g^{-1} =  \gamma (k')^{-1} a^{-1} k^{-1}$, by the uniqueness part of Theorem \ref{th:cartan}, we know that one must take $a= \diag(\chi^{n}, 1 , \chi^{-n})$.
\end{proof}

Consider the group 
$\GL_{3}(F_{\pi})$ and define for $g\in \GL_{3}(F_{\pi})$
\begin{equation}
    l(g) = -2 \min_{i,j}\{ v_\pi (g_{ij})\}.
    \label{eq:defi_of_l}
\end{equation}

\begin{lemma}
  \label{le:ispositive}
  Recall $\mathcal{A}$ from Lemma \ref{le:Aset}.
  \begin{enumerate}
\tightlist
  \item Given any $k, k' \in \GL_{3}(\mathcal{O}_{\pi})$, we have $l(kgk') = l(g)$.
      \item One has $l(g_{1}g_{2}) \leq l(g_{1}) + l(g_{2})$ for any $g_{1}, g_{2} \in \GL_{3}(F_{\pi})$.
	  \item For $g \in \mathcal{A}$, $l(g) \geq 0$ and $l(g^{-1}) = l(g)$.
	  \item For $g,h \in \mathcal{A}$, one has $l(g^{-1} h) \geq 0$. 
	      \item 
	      The map $\tilde{d}(g,h):= \frac{1}{2}[l(g^{-1}h) + l(h^{-1} g)]$ defines a metric on $  \mathcal{A} / \GL_{3}(\mathcal{O}_{\pi})$.
  \end{enumerate}
\end{lemma}
\begin{proof}
See Appendix \ref{proofoflemma5.3}.
\end{proof}

One can use the map $l$ from Equation \ref{eq:defi_of_l} to understand the action of $\U_{3}(F_{\pi})$ on $\mathcal{B}$. 
Here is a lemma about what $l$ can measure.
\begin{lemma}
  \label{le:hecke_nghbs}
  Let $\mathcal{A}$ be as in Lemma \ref{le:Aset} and let $g,h \in \mathcal{A}$.
  Let $\Lambda_{g} = g \mathcal{O}_{\pi}^{3}$ and $\Lambda_{h} = h \mathcal{O}_{\pi}^{3}$ and $\Lambda_{h} \neq \Lambda_{g}$. 
  Thus $\Lambda_{g},\Lambda_{h}$ are distinct self-dual lattices in $F_{\pi}^{3}$.
  Then $\tilde{d}(g,h) = 2$ if and only if $ \Lambda_{g }\subseteq  \pi^{-1}\Lambda_{h}$ and $ \Lambda_{h} \subseteq \pi^{-1} \Lambda_{g}$.
\end{lemma}
\begin{proof}
The proof can be found in Appendix \ref{proofoflemma5.4}
\end{proof}

\begin{proposition}
\label{prop:distance}
    Consider $\mathcal{B}$ as a graph and let $d(x,y)$ denote the edge-distance between two points $x,y \in \mathcal{B}$  
    Let $g,h \in \mathcal{A} \subseteq \GL_{3}(F_{\pi})$. 
    Then, one has $\tilde{d}(g,h) = d(g v_{0},hv_{0})$.
\end{proposition}
\begin{proof}
The full proof can be found in Appendix \ref{proofoflemma5.5}.
\end{proof}

\subsection{Proof of main theorem}
\label{ss:proof}
Denote $\Gamma = \U_{3}(\mathbb{Z}[\chi^{-1}])$.
We now describe the overall idea. Let $\mathcal{H} \subseteq \Gamma$ be the finitely generated group of Clifford$+\mathsf{R}$. 
We want to consider the orbit $\mathcal{H} \cdot e_{\mathcal{B}}$ and show that it is the same as the orbit $\Gamma \cdot e_{\mathcal{B}}$. 
We will do this by starting with some $v = g v_{0} $  for some $g \in \Gamma$.  
Via induction on $l(v)$, we will show that $v = h v_{0}$ for some $h \in \mathcal{H}$. Once this is done, we have 
\begin{equation}
    g^{-1}h  \in  \Gamma \cap \U_{3}(\mathcal{O}_{\pi}) = \U_{3}(\mathcal{O}_{F}).
\end{equation}
The following lemma helps us show that all monomials are in Clifford$+\mathsf{R}$.

\begin{proposition}
The set $\U_{3}(\mathcal{O}_{F})$ is the set of monomial matrices. They form a subset $\U_{3}(\mathcal{O}_{F}) \subseteq \mathcal{H}$ and are therefore in the Clifford$+\mathsf{R}$ set.
\end{proposition}
\begin{proof}
    First observe that if $m,n \in \mathbb{Z}$ then
    \begin{equation}
	m^{2} + n^{2} -mn \leq 1  \implies   (m,n) \in \{ (0,0), (\pm 1,0),(0, \pm1), (1, 1), (-1,-1)\}.
    \end{equation}
Hence, if $\alpha = m + \omega n $ satisfies $\alpha \overline{\alpha} \leq 1$, we know that $\alpha \in \{0, \pm 1, \pm \omega, \pm  \omega^{2 }\}$.
    Now, if $a_{1} , a_{2} ,a_{3} \in \mathcal{O}_{F} = \mathbb{Z}[\omega]$ satisfy    
	$a_1\overline{a_{1}} + a_{2} \overline{a_{2}} + a_{3} \overline{a_{3}} = 1,$   
    then we know that each $a_{i} \overline{a_{i}} \in \mathbb{Z}_{\geq 0}$ so exactly one of the three terms on the left must be $1$. 
    Hence, if $A \in \GL_{3} (\mathcal{O}_{F})$ satisfies $A^{*} A = I$, we know that each row has exactly one non-zero element and it must be one of $\{\pm 1, \pm \omega, \pm \omega^{2} \}$.

    Finally, all of these monomials lie in $\mathcal{H}$, since $\mathcal{H}$ contains $S$, $R$, $H^{2}$, and $HSHHHSH$, and one checks that these four elements generate $\U_{3}(\mathcal{O}_{F})$.
\end{proof}

\begin{proof}
{\bf (of Theorem \ref{th:ring_equality_p_is_3})}

    We will argue that for any pure vertex $v \in \mathcal{P}_{\mathcal{B}}$ such that $v \neq e_{0}$, one can find an element $g \in \mathcal{H}$ such that $d(g e_{0}, v) = d(e_{0},v) -2 $. See Figure \ref{fig:Btbuilding2}(b) for a schematic.

    Let $S_{0} \subseteq \mathcal{P}_{\mathcal{B}}$ be the set
    \begin{equation}
        S_{0} =\{w \in \mathcal{P}_{\mathcal{B}} \mid d(e_{0},w)=2\}.
    \end{equation}
    We observe that $\# S_{0} = 4 \times 3 = 12$. 
    Write $e_{1} = H e_{0}$. We have $d(e_{1},e_{0}) = 2$  and $e_{1} \in S_{0}$.
    It is  sufficient to show that $\U_{3}(\mathcal{O}_{F}) e_{1} = S_{0}$. Indeed, at least one $e \in S_{0}$ lies on the shortest path connecting $e_{0}$ to $v$ and if we show $e \in \U_{3}(\mathcal{O}_{F}) e_{1}$, we are done. 
    We also have that $\Stab(e_0,\Gamma) = \U_{3}(\mathcal{O}_{F}) \implies \Stab(e_{1},\Gamma) = H U_{3}(\mathcal{O}_{F})H^{-1}$. 
    So the group
    \begin{equation}
        \mathcal{G} = \{ g \in \Stab(e_{0},\Gamma) \mid g e_{1} = e_{1}\} = H \U_{3}(\mathcal{O}_{F}) H^{-1} \cap \U_{3}(\mathcal{O}_{F}).
    \end{equation}
    By calculating these finite groups explicitly, we know that $\# \mathcal{G} = 108$, whereas $\#\U_{3}(\mathcal{O}_{F}) =  3! \times 6^{3} = 1296$. By the orbit-stabilizer theorem, we get that $\# \U_{3}(\mathcal{O}_{F}) e_{1} = 12$ and hence we are done.
\end{proof}

\section{Acknowledgments}

We would like to thank Shai Evra, Ori Parzanchevski for discussing and sharing their understanding about Bruhat-Tits buildings with us. The authors also thank the QPL reviewers for their comments. This work was supported in part by Canada’s NSERC, NTT Research, and the Royal Bank of Canada. IQC and the Perimeter Institute (PI) are supported in part by the Government of Canada through ISED and the Province of Ontario (PI). JY acknowledges support from the NSERC project FoQaCiA under Grant No. ALLRP-569582-21. NG acknowledges funding
from the ERC Grant 101096550 and from the SNSF grant 225437.
\bibliographystyle{eptcs}
\bibliography{refs}

\appendix

\section{Proof of Proposition \ref{pr:simplex_classification}}
\label{ap:simplex_classification}
\begin{definition}
    Given a self-dual lattice chain $\{\Lambda_i\}_{i \in \mathbb{Z}}$, we define the anchor of the lattice chain  as the unique half-integer $i \in \tfrac{1}{2}\mathbb{Z}$ defined as the following.
\begin{enumerate}
    \item If $\Lambda_j = \dual{\Lambda_j}$ for some $j \in \mathbb{Z}$, we denote the anchor to be $i=j$.
    \item If $\Lambda_j = \dual{\Lambda_{j+1}}$ for some $j \in \mathbb{Z}$, we denote the anchor to be $i=j+1/2$. 
\end{enumerate}
\end{definition}

We assume the labeling of a lattice chain $\{\Lambda_i\}_{i \in \mathbb{Z}}$ is such that the anchor point is at $i=0$ or $i=\frac{1}{2}$ for each case respectively.

\begin{proposition}
    \label{pr:simplex_classification_appendix}
For $\mathcal{B}$ described above, we have the following classification of simplices.
  \begin{enumerate}
    \item All the $0$-simplices  of $\mathcal{B}$ are one of the following two types.
      \begin{itemize}

          \item 
Lattice chains of the form $\{ \pi^{i}\Lambda\}_{i \in \mathbb{Z}}$ for some self-dual lattice $\Lambda \subseteq F_{\pi}^{3}$  arranged as 
\begin{equation}
    \dots \subseteq \pi^{2} \Lambda \subseteq \pi \Lambda \subseteq \Lambda \subseteq \pi^{-1} \Lambda \subseteq \dots
\end{equation}
    \item 
	Lattice chains of the form $\{ \pi^i\Lambda, \pi^i \dual{\Lambda}\}_{i \in \mathbb{Z}}$ 
	 such that $\dual{\Lambda} \subsetneq\Lambda$ and
	 \begin{equation}
	       \dots \subseteq \pi \Lambda \subseteq \dual{\Lambda}  \subseteq \Lambda \subseteq \pi^{-1} \dual{\Lambda}  \subseteq \dots
	 \end{equation}
      \end{itemize}
  \item All the 1-simplices of $\mathcal{B}$ are of the form $\{ \pi^i\Lambda_1, \pi^i \Lambda_0, \pi^i \dual{\Lambda_{1}} \}_{i \in \mathbb{Z}}$  for some lattices $\Lambda_0,\Lambda_1\subseteq F_{{\pi}}^3$ 
      such that $\Lambda_0$ is self dual and $\Lambda_1^{\dual{}} \subsetneq \Lambda_1$ arranged as
      \begin{equation}
              \dots \subseteq \pi \Lambda_0 \subseteq \pi\Lambda_{1}  \subseteq \dual{\Lambda_{1}} \subseteq \Lambda_0 \subseteq \Lambda_{1} \subseteq \pi^{-1} \dual{\Lambda_{1}} \subseteq \dots
      \end{equation}
      
      \item There are no 2-simplices in $\mathcal{B}$.
      \end{enumerate}
\end{proposition}
\begin{proof}
\begin{figure}
    \centering
    \label{fig:alternatingvert}
    \begin{tikzpicture}[
        node distance=6cm,
        dot/.style={circle, fill=black, inner sep=3pt},
        edgelabel/.style={font=\scriptsize, inner sep=4pt, outer sep=0pt},
        lababove/.style={font=\footnotesize, align=center, above=5pt},
    ]

    \node[dot] (left) {};
    \node[dot, right=of left] (mid) {};
    \node[dot, right=of mid] (right) {};

    \node[lababove] at (left.north)  {$\{\pi^i \Lambda_0\}$};
    \node[lababove] at (mid.north)   {$\{\pi^i \Lambda_1, \pi^i \dual{\Lambda_1}\}$};
    \node[lababove] at (right.north) {$\{\pi^i \Lambda_2\}$};

    \draw[line width=1.5pt] (left) -- node[edgelabel, above] 
        {$\{\pi^i \Lambda_1,\, \pi^i \Lambda_0,\, \pi^i \dual{\Lambda_1}\}$} (mid);

    \draw[line width=1.5pt] (mid) -- node[edgelabel, above] 
        {$\{\pi^i \Lambda_1,\, \pi^i \Lambda_2,\, \pi^i \Lambda_1^\sharp\}$} (right);

    \end{tikzpicture}
    \caption{An alternating vertex (centre) and two of its four pure neighbours}
\end{figure}
We will systematically investigate lattice chains of rank $1,2,3$ to get all possible self-dual lattice chains.
    \begin{enumerate}[(a)]
    \item 
    { Let $\{\Lambda_i\}$ be a rank-1 lattice chain in $\mathcal{B}$. Hence $\Lambda_0$ must be $\pi$-equivalent to every lattice in the chain. In particular, $\Lambda_1 = \pi^{-k} \Lambda_0$ for some $k \in \mathbb{Z}_{\geq 1}$. To maintain rank 1, the only possibility is $k=1$.
    
    By assumption the anchor point of the chain is either at $0$ or $\tfrac{1}{2}$. If the anchor point was at index $\frac{1}{2}$, it would imply that $\dual{\Lambda_0} = \Lambda_1 $ which would be an instance of odd-self duality, contradicting Lemma \ref{le:dual_scaling}. The only remaining option would be for the anchor point to be at index $0$, which implies self-duality for $\Lambda_0$. 
    
    Clearly $\{ \pi^i\Lambda_0\}_{i \in \mathbb{Z}}$ cannot contain any subchains and will be a $0$-simplex in $\mathcal{B}$.
Setting $\Lambda= \Lambda_0$ gives the chain equation.

    }
    \item Let $\{\Lambda_i\}$ be a rank-2 lattice chain in $\mathcal{B}$. 
	\begin{itemize}

	    \item 

        { First, let us assume that the anchor point is at index 0. Hence, we get that the chain $\{\pi^{i}\Lambda_{0}\} \subseteq \{\Lambda_{i}\}$ 
	forms a self-dual subchain inside $\{\Lambda_{i}\}$. 
	But since $\{\Lambda_{i}\}$ is a rank $2$ lattice chain, 
        there exists another $\pi$-equivalence class of lattices. $\Lambda_1$ could not be $\pi$-equivalent to $\Lambda_0$, as this would cause the whole chain to be $\pi$-equivalent to $\Lambda_0$, contradicting the chain being rank $2$. Additionally, $\Lambda_1$ could not be $\pi$-equivalent to a self dual lattice as there can only be one self-dual lattice in the chain, because there can be at most one anchor point in the chain. 
        By the anchor being at index $0$ we have $\Lambda_{-1} = \dual{\Lambda_1}$, while rank $2$ forces $\Lambda_{-1} = \pi \Lambda_1$ (a rank-$2$ chain has period $2$). Hence $\dual{\Lambda_1} = \pi \Lambda_1$, i.e.\ $\Lambda_1$ is $\pi$-equivalent to its dual with the odd exponent $i=1$, contradicting Lemma \ref{le:dual_scaling}.
        
}

\item 

Suppose now that $\{\Lambda_{i}\}$ has its anchor at $\frac{1}{2}$. This directly implies that $\dual{\Lambda_0} = \Lambda_1$. Hence, $\Lambda_0$ and $\Lambda_1$ cannot be $\pi$-equivalent (Lemma \ref{le:dual_scaling}), so all other lattices in the chain must be scalings of those 2. Since there will be no self-dual lattice in the chain, the lattice chain $\{\Lambda_i\}_{i \in \mathbb{Z}}$ 
cannot contain any rank 1 0-simplex lattice chains. Alternating between scalings of $\Lambda_0$ and $\Lambda_1$ is forced due to the lattice chain being fixed by $\pi$ scalings, which leads us to the second type of 0-simplex in the statement. 
	\end{itemize}
    \item Let $\{\Lambda_i\}$ be a rank-3 lattice chain in $\mathcal{B}$. 

	Because duality is an involution, at least one of the three $\pi$-equivalence classes of lattices in $\left\{\Lambda_{i}\right\}$ contains a self-dual lattice. 
    Let $\Lambda_{0}$ be this self-dual lattice (and so anchor is at 0). There can be no other $\pi$-equivalence class containing a self-dual lattice other than $\{\pi^i \Lambda_0\}_{i \in \mathbb{Z}}$ because of the construction of the anchor point. In particular $\dual{\Lambda_1}$ must be $\Lambda_{-1}$.
	
	Hence, we must necessarily have that this lattice chain is of the type given in the statement and is a $1$-simplex since it contains the two types of $0$-simplices demonstrated above.

	\item There are no rank 4 or above lattices chains since we are working with lattices in $F_{\pi}^3$. Hence, all self-dual lattice chains are classified and no $2$-simplices could be found.
\end{enumerate}
\end{proof}
\section{Proof of Proposition \ref{pr:forest}}
\label{se:forest_proof}
\begin{proposition}
    The $0$-simplices and $1$-simplices of $\mathcal{B}$ form a forest. That is, $\mathcal{B}$ has no loops.
\end{proposition}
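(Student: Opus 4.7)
The plan is to argue by contradiction: assume $\mathcal{B}$ contains a cycle and take one of minimum length. Because every $1$-simplex of $\mathcal{B}$ contains exactly one pure and one alternating $0$-simplex (\cref{pr:simplex_classification}), the underlying graph is bipartite, so the cycle has even length $2m \ge 4$; write it as $v_0 v_1 \cdots v_{2m-1} v_0$ with $v_{2i}$ pure and $v_{2i+1}$ alternating.

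The first step is a local count at alternating vertices. By \cref{pr:simplex_classification}, the pure neighbors of $[\Lambda_1, \dual{\Lambda_1}]$ correspond to intermediate self-dual $\Lambda_0$ with $\dual{\Lambda_1} \subsetneq \Lambda_0 \subsetneq \Lambda_1$, equivalently to isotropic lines in the $\mathbb{F}_3$-space $\Lambda_1/\dual{\Lambda_1}$ equipped with the reduction of the Hermitian form. This reduction is a nondegenerate symmetric bilinear form because the nontrivial Galois automorphism acts trivially modulo $\pi$ (as observed after \cref{eq:iso_of_galois}). A direct computation, using that $\pi\Lambda_1 \subsetneq \dual{\Lambda_1}$ and that $[\Lambda_1 : \pi\Lambda_1] = 27$, shows that $\dim_{\mathbb{F}_3}(\Lambda_1/\dual{\Lambda_1}) = 2$ and that the induced form is split hyperbolic; hence there are exactly $2$ isotropic lines. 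So every alternating vertex has degree exactly $2$, and in the minimal cycle the two neighbors $v_{2i}, v_{2i+2}$ of $v_{2i+1}$ exhaust its pure neighbors.

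The second step is a uniqueness statement: two distinct pure vertices $[\Lambda], [\Lambda']$ admit at most one common alternating neighbor. Indeed, a common neighbor $[\Lambda_1, \dual{\Lambda_1}]$ must satisfy, after fixing $\pi$-scalings, both $\dual{\Lambda_1} \subsetneq \Lambda \subsetneq \Lambda_1$ and $\dual{\Lambda_1} \subsetneq \pi^k \Lambda' \subsetneq \Lambda_1$ for some $k \in \mathbb{Z}$; this forces $\Lambda_1 = \Lambda + \pi^k \Lambda'$ and $\dual{\Lambda_1} = \Lambda \cap \pi^k \Lambda'$, and the dimensional constraint $\dim_{\mathbb{F}_3}(\Lambda_1/\dual{\Lambda_1}) = 2$ pins down $k$ uniquely. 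Together with the degree count this rules out $4$-cycles: such a cycle $v_0 v_1 v_2 v_3 v_0$ would produce two distinct alternating vertices with identical pure neighbor set $\{v_0, v_2\}$.

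To rule out minimal cycles of length $\ge 6$, the idea is to exhibit a monotone invariant on pure vertices coming from the Cartan decomposition (\cref{th:cartan}) relative to the origin $e_\mathcal{B} = [\mathcal{O}_\pi^3]$: the Cartan invariants of a self-dual lattice live in the ``positive chamber'' of an apartment of affine type $\tilde{A}_1$, and traversing one alternating vertex changes the total weight by $\pm 1$ in a predictable way; the uniqueness of step 2 prevents backtracking, so a closed cycle is impossible. Carrying out this monotonicity argument rigorously is the main technical obstacle — the delicate point is showing that once one leaves the origin along an edge, the invariant strictly increases along every subsequent ``outward'' step, with no closed orbit permitted. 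An alternative is to invoke the general theory of \cite{abramenko2002lattice}: since $U_3$ over the ramified extension $F_\pi/\mathbb{Q}_3$ has relative rank one, its Bruhat--Tits building is of dimension one and its apartments are bi-infinite lines, so any cycle in $\mathcal{B}$ would give a cycle in a line, which is absurd.
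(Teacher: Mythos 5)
Your argument has a genuine gap. Steps one and two (degree count at alternating vertices, and uniqueness of a common alternating neighbor of two pure vertices) are fine in outline and do rule out $4$-cycles, but your step three is where the whole content lies, and you explicitly acknowledge that the monotonicity argument is not carried out. Saying ``the delicate point is showing that once one leaves the origin along an edge, the invariant strictly increases'' names the problem rather than solving it, and the fallback of citing \cite{abramenko2002lattice} that rank-one buildings are trees is precisely the external black box the paper is trying to replace with an explicit, self-contained argument. As written the proposal does not prove the statement.

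The paper's own route is more direct and avoids needing a monotone invariant at all. Suppose a cycle exists, and write it as a sequence of $1$-simplices $\{A_i, S_i, A_i^\sharp\}$ with alternating vertices $A_0, A_1, \dots, A_n$ and pure vertices $S_0, \dots, S_n, S_0$. Chasing the inclusions in the ``downward'' direction through each edge gives $\pi^n A_n \subset A_0$, and chasing the dual inclusions gives $A_0^\sharp \subset \pi^{-n} A_n^\sharp$. These two relations splice together with the original edge at $S_0$ to produce a single self-dual lattice chain
\begin{equation}
\dots \subset \pi^n A_n \subset A_0 \subset S_0 \subset A_0^\sharp \subset (\pi^n A_n)^\sharp \subset \dots
\end{equation}
of rank at least $4$, i.e.\ a $2$-simplex. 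But \cref{pr:simplex_classification} shows $\mathcal{B}$ has no $2$-simplices, a contradiction. This uses nothing beyond the classification of simplices you already have, whereas your approach would require developing an independent distance/Cartan-invariant theory before the tree structure is even established.

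One further correction: in step one you assert that the reduction of the Hermitian form on $\Lambda_1/\dual{\Lambda_1}$ is a nondegenerate \emph{symmetric} bilinear form because conjugation is trivial modulo $\pi$. This is not correct as stated. The natural $\mathbb{F}_3$-valued form in this situation is $\chi\langle\,\cdot\,,\,\cdot\,\rangle \bmod \pi\mathcal{O}_\pi$, and since $\overline{\chi}/\chi \equiv -1 \pmod{\pi}$, this reduction is \emph{anti-symmetric}, as shown in \cref{le:nondegen2}. You still get exactly two pure neighbors (via \cref{le:mod3_antisym}), so the count in \cref{pr:alternatings_has_pure_neighbours} is unaffected, but the symmetry claim and the ``split hyperbolic'' justification are wrong.
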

\begin{proof}
    Suppose that $\mathcal{B}$ contains a loop (see Figure \ref{fig:loop}). Since each edge connects a vertex in $P_\mathcal{B}$ to a vertex in $A_\mathcal{B}$, we may arbitrarily choose a vertex $\{\pi^iS_0\}_{i \in \mathbb{Z}} \in P_\mathcal{B}$ as the start and end of the loop. This loop, as for any other path, can be described as a sequence of edges.

    \begin{align}
    \dots \subset \pi A_0\subset \pi S_0 \subset A_0 \subset S_0 \subset A_0^{\dual{}} \subset \pi^{-1}S_0 \subset \pi^{-1} A_0^{\dual{}} \subset \dots \\
    \dots \subset \pi A_0\subset \pi S_1\subset A_0 \subset S_1 \subset A_0^{\dual{}} \subset \pi^{-1}S_1 \subset \pi^{-1} A_0^{\dual{}} \subset \dots\\
    \dots \subset \pi A_1\subset \pi S_1 \subset A_1 \subset S_1 \subset A_1^{\dual{}} \subset \pi^{-1}S_1 \subset \pi^{-1} A_1^{\dual{}} \subset \dots\\
    \dots \subset \pi A_1\subset \pi S_2 \subset A_1 \subset S_2 \subset A_1^{\dual{}} \subset \pi^{-1}S_2 \subset \pi^{-1} A_1^{\dual{}} \subset \dots\\
    {\vdots}\\
    \dots \subset \pi A_n\subset \pi S_n \subset A_n \subset S_n \subset A_n^{\dual{}} \subset \pi^{-1}S_n \subset \pi^{-1} A_n^{\dual{}} \subset \dots \\
    \dots \subset \pi A_n\subset \pi S_0 \subset A_n \subset S_0 \subset A_n^{\dual{}} \subset \pi^{-1}S_0 \subset \pi^{-1} A_n^{\dual{}} \subset \dots \\
    \end{align}

\begin{figure}
    \centering
    \begin{center}
\begin{tikzpicture}[
    dot/.style={circle, fill=black, inner sep=2.5pt},
    >={Stealth[length=3mm, width=2mm]},
    every label/.append style={font=\small},
]

    \node[dot] (origin) at (90:3cm) {};
    \node[dot] (n1)     at (30:3cm) {};
    \node[dot] (n2)     at (-30:3cm) {};
    \node at (0,-1.5) {\large $\dots$};
    \node[dot] (n5)     at (210:3cm) {};
    \node[dot] (n6)     at (150:3cm) {};

    \draw[->, thick, bend left=20]
        (origin) to node[above right, xshift=5pt, yshift=2pt] {$\{A_0, S_0, A_0^\sharp\}$} (n1);

    \draw[->, thick, bend left=20]
        (n1) to node[below right, xshift=6pt, yshift=-2pt] {$\{A_0, S_1, A_0^\sharp\}$} (n2);

    \draw[->, thick, bend left=20]
        (n5) to node[below left, xshift=-6pt, yshift=-3pt] {$\{A_{n-1}, S_n, A_{n-1}^\sharp\}$} (n6);

    \draw[->, thick, bend left=20]
        (n6) to node[above left, xshift=-6pt, yshift=3pt] {$\{A_n, S_0, A_n^\sharp\}$} (origin);

\end{tikzpicture}
\caption{What a loop would look like in $\mathcal{B}$}
\label{fig:loop}
\end{center}
\end{figure}

    by starting at $A_0$ and forming a chain of subsets on the left diagonal, then starting at $\dual{A_0}$ and forming a chain of subsets on the right diagonal you arrive at the following relations:

    \begin{align}
    \pi^{n}A_n \subset A_0 \\
        A_0^{\dual{}} \subset \pi^{-n}A_n^{\dual{}}
    \end{align}

    this relationship implies that you can form the following self-dual lattice chain

    \begin{equation}
        \dots \subset (\pi^n A_n) \subset A_0 \subset S_0 \subset A_0^{\dual{}} \subset (\pi^n A_n)^{\dual{}} \subset \dots
    \end{equation}

    Since the loop uses two distinct alternating vertices $A_0$ and $A_n$ together with the pure vertex $S_0$, this is a self-dual lattice chain of rank greater than $3$ (a $2$-simplex), contradicting Proposition \ref{pr:simplex_classification}.
\end{proof}

\section{Proof of Lemma \ref{antisym}}
\label{proofappendixc}
\begin{lemma}
Consider $\mathbb{F}_{3}^{3}$ and an anti-symmetric matrix $A \in \M_{3}(\mathbb{F}_{3})$ given by the following for some $(a,b) \in \mathbb{F}_{3}^{2} \setminus \{(0,0)\}$:
\begin{equation}
    A 
    = 
    \begin{pmatrix}
	\ &  a & b \\ 
	-a & & \\ 
	-b & & 
    \end{pmatrix}.
\end{equation}

Consider the bilinear form $\langle\ , \ \rangle_{A}$ given by 
\begin{equation}
    \langle x,y\rangle_{A}  = \langle A x, y \rangle.
\end{equation}

Then, the vector $v=(0,b,-a)^{T} \in \mathbb{F}_{3}^{3}$ is the unique vector up to scaling satisfying $\langle v ,w \rangle_{A} = 0$ for all $w \in \mathbb{F}_{3}^{3}$. Furthermore, there are exactly four subspaces $V \subseteq \mathbb{F}_{3}^{3}$ of $\mathbb{F}_{3}$-dimension 2 such that $\mathbb{F}_{3} \cdot v \subseteq V$ and $V^{\perp} = \{ w \in \mathbb{F}_{3}^{3} \mid \langle u,w \rangle_{A} = 0 \text{ for all }u \in V\} = V$.
\end{lemma}

\begin{proof}
    We know that for $x=(x_{1},x_{2},x_{3})$ and $y=(y_{1},y_{2},y_{3})$, one has
\begin{equation}
    \langle x,y \rangle_{A} = a(x_{1} y_{2} - x_{2} y_{1}) + b (x_{1} y_{3} - x_{3} y_{1}).
\end{equation}
Then, it is clear that the only way one can have $\langle x, y \rangle_{A} = 0$ for all $y \in \mathbb{F}_{3}^{3}$ is when $x_{1} = 0$ and $a x_{2} + b x_{3} = 0$. This shows that $v = (0,b,-a)$ is the vector as required.

Then assuming $a \neq 0$, one can identify $(x_{1}, x_{2},x_{3}) + \mathbb{F}_{3} \cdot v \leftrightarrow (x_{1},x_{2})$ and get an isomorphism of $\mathbb{F}_{3}$-spaces ${\mathbb{F}_{3}^{3}}/{ \mathbb{F}_{3} \cdot v } \simeq \mathbb{F}_{3}^{2}$. Because of the property of $v$, the map $\langle \ , \ \rangle_{A}$ quotients on $\mathbb{F}_{3}^{3} / \mathbb{F}_{3} \cdot v$ to the map
\begin{equation}
    ( (x_{1},x_{2}), (y_{1},y_{2}))  \mapsto  a(x_{1} y_{2} - x_{2} y_{1}).
\end{equation}
This is a non-degenerate alternating form on $\mathbb{F}_{3}^{2}$, so every line $\mathbb{F}_{3} w$ is self-dual: it is isotropic because the form is alternating, and $(\mathbb{F}_{3}w)^{\perp}$ is again $1$-dimensional by non-degeneracy, hence $(\mathbb{F}_{3}w)^{\perp} = \mathbb{F}_{3}w$. There are four such lines $w$ in $\mathbb{F}_{3}^{2}$, giving the four subspaces $V = \mathbb{F}_{3} v + \mathbb{F}_{3} w$.

We leave the case of $a=0$ for the reader which is identical.
\end{proof}

\section{Proof of Lemma \ref{le:techincal_lemma}}
\label{proofoflemma3.16}
\begin{lemma}
	  Let $x,y \in F_{\pi}$ be such that $\overline{x} x + \overline{y} y \in \mathcal{O}_{\pi}$. Then, $x,y \in \mathcal{O}_{\pi}$.
	\end{lemma}
\begin{proof}
    If $v_{\pi}(x) < v_{\pi}(y)$ then we clearly have that $ 0 \leq v_{\pi}( \overline{x} x + \overline{y} y )  = 2 v_{\pi}(x)$ and we are done.
    
    Otherwise, let's assume that $x = \chi^{n}x', y=\chi^{n}y'$ for some $n < 0$ such that $x',y' \in \mathcal{O}_{\pi}^{\times}$.
    Then, we have $x',y' \in \pm 1 + \pi \mathcal{O}_{\pi}$ implying that $\overline{x'} x' + \overline{y'} y' \in  2 + \pi \mathcal{O}_{\pi}$. Therefore, we must have that $x \overline{x} + y \overline{y} \in \chi^{n} \overline{\chi}^{n}(2 + \pi \mathcal{O}_{\pi} )$ forcing $n$ to be non-negative. This is impossible. So $x,y \in \mathcal{O}_{\pi}$.
\end{proof}

\section{Proof of Lemma \ref{le:nondegen2}}
\label{proofofprop5.19}
\begin{lemma}
  \label{le:nondegen2repeat}
  Consider the form $\langle \ , \ \rangle + \mathcal{O}_{\pi}$ defined in 
  Equation \ref{eq:defi_of_angle4}. Then, after identification of $\Lambda / \pi \Lambda \simeq \mathbb{F}_{3}^{3}$, the map $\chi \langle \ , \ \rangle + \pi \mathcal{O}_{\pi}$ is anti-symmetric and is equivalent to the bilinear map $\mathbb{F}_{3}^{3} \times \mathbb{F}_{3}^{3} \rightarrow  \mathbb{F}_{3}$ considered in Lemma \ref{antisym}.
\end{lemma}
\begin{proof}
    To show anti-symmetric, one sees that the unit $\overline{\chi}/\chi = (3-\chi)/\chi = -1 + \overline{\chi} \in \mathcal{O}_{\pi}^{\times}$.
    Therefore, one gets that $\overline{\chi}/\chi  = -1 \pmod{ \pi \mathcal{O}_{\pi}}$. So one sees that for $x,y \in \Lambda$ 
    we get 
    \begin{align}
	\chi \langle x, y \rangle + \pi \mathcal{O}_{\pi}  & =  \chi \overline{\langle y, x \rangle } + \pi \mathcal{O}_{\pi}. \\
							   & =  ({\chi}/{\overline{\chi}}) \cdot \left( \overline{\chi} 
\overline{\langle y, x \rangle } + \pi \mathcal{O}_{\pi} \right).
    \end{align}
    Then, using that for $a \in \mathcal{O}_{\pi}$ one has $\overline{a} = a \pmod{\pi \mathcal{O}_{\pi}}$, we get the required anti-symmetric property.

    Now suppose that $\Lambda = g \mathcal{O}_{\pi}^{3}$ for some $g \in \GL_{3}({F}_{\pi})$. 
Without loss of generality, we can replace $g$ by $gk$ for $k \in \GL_{3}(\mathcal{O}_{\pi})$. 
Hence, one can write $g$ in Hermite normal form (up to scaling and rescaling by some power of $\chi$) 
and assume that $g$ is a lower-triangular matrix.
    
Using the isomorphism $x \mapsto  g^{-1}x$, one can get the isomorphism
    \begin{equation}
	\frac{\Lambda}{\pi \Lambda} = \frac{g \mathcal{O}_{\pi}^{3}}{ \pi g \mathcal{O}_{\pi}^{3}} \rightarrow    \frac{\mathcal{O}_{\pi}^{3} }{\pi \mathcal{O}_{\pi}^{3}}.
    \end{equation}
    Keeping track of the bilinear product, one knows that the relevant form on $\mathcal{O}_{\pi}^{3} / \pi \mathcal{O}_{\pi}^{3} \simeq \mathbb{F}_{3}$ is 
    \begin{equation}
	(x,y) \mapsto  \chi \langle g x, gy \rangle + \pi \mathcal{O}_{\pi} \in \mathbb{F}_{3}.
    \end{equation}

    Let 
    \begin{equation}
        g = 
	\begin{pmatrix}
	    g_{11} & & \\ 
	    g_{21} & g_{22} & \\ 
	    g_{31} & g_{32} & g_{33}
	\end{pmatrix}.
    \end{equation}
    Then we get that the matrix
    \begin{equation}
        g^{*} g =  
	\begin{pmatrix}
	    \overline{g_{11}}g_{11} + 
	    \overline{g_{21}}g_{21} + 
	    \overline{g_{31}}g_{31}  
	    & 
 \overline{g_{21}} g_{22} 
 + \overline{g_{31}} g_{32} 
	    & 
	    \overline{g_{31}} g_{33} 
	     \\ 
	     \overline{g_{22}}g_{21} +  \overline{g_{32}}g_{31}
	    & 
	    \overline{g_{22}} g_{22} + \overline{g_{32}}g_{32}
	    & 
\overline{g_{32}} g_{33}
	     \\ 
	    \overline{g_{33}} g_{31} 
	    & 
	    \overline{g_{33}} g_{32}
	    & 
	    \overline{g_{33}} g_{33}
	\end{pmatrix}
    \end{equation}
    must satisfy $ \langle \chi g^*g x ,  y \rangle \in \mathcal{O}_{\pi}$ for all $x,y \in \mathcal{O}_{\pi}^{3}$. 
    Therefore, all the entries must be in $\mathcal{O}_{\pi}$. 
    In particular, the diagonal entries are fixed by $\overline{( \ )}$ and lie in $\mathbb{Z}_{3}$. 
    Therefore the valuation $v_\pi(\ )$ evaluated at the diagonal elements of $\chi g^{*}g$ must be at least 1. Hence all the diagonals are in $\pi \mathcal{O}_{\pi}$ and therefore reduce to $0 \in \mathbb{F}_{3}$.
    
    This implies that $g_{33} \in \mathcal{O}_{\pi}$. 
    Furthermore, the entry $\overline{g_{22}} g_{22} + \overline{g_{32}}g_{32} \in \mathcal{O}_{\pi}$. By Lemma \ref{le:techincal_lemma} applied to this entry, $g_{22} , g_{32} \in \mathcal{O}_{\pi}$ (and $g_{33}\in\mathcal{O}_{\pi}$ from above). In particular, we get that the bottom-right half of $\chi g^{*}g = 0 \pmod{\pi \mathcal{O}_{\pi}}$. 

    Depending on $g_{21},g_{31} \in F_{\pi}$, the only non-zero entries in $\chi g^{*} g$ can be 
    $a = \chi( \overline{g}_{21} g_{22} + \overline{g_{31}} g_{32})$ and
    $b=\chi^{}(\overline{g_{31}}  )g_{33}$. Observe that both $a,b$ cannot be zero, otherwise one gets that $\dual{\Lambda} = \Lambda$.

\end{proof}
\section{Proof of Proposition \ref{pr:chain}}
\label{proofofprop5.22}
\begin{proposition}
\label{pr:chain_repeat}
    Let $g,h \in \GL_{3}(F_{\pi})$ be such that $\Lambda_{g} = g \mathcal{O}_{\pi}^{3}$ and $\Lambda_{h} =h \mathcal{O}_{\pi}^{3}$ are self-dual lattices. Furthermore, we assume that 
    \begin{equation}
	\dots \subseteq \pi^{n}\Lambda_{g} \subseteq \Lambda_{h} \subseteq \pi^{-n}\Lambda_{g} \subseteq  \pi^{-2n} \Lambda_{h} \subseteq \dots,
	\label{eq:non-lattice-chain-repeat}
    \end{equation}
    where $n \in \mathbb{Z}_{\geq 0}$ is the least such $n$ that makes Equation \ref{eq:non-lattice-chain} hold.
    Then, there exists a sequence of self-dual lattices $\Lambda_{1}, \Lambda_{2} ,\dots, \Lambda_{n-1} \subseteq F_{\pi}^{3}$ such that 
    \begin{align}
	\label{eq:non-lattice-chain2-repeat}
	\dots \subseteq \pi \Lambda_{1} \subseteq &  \Lambda_{g} \subseteq  \pi^{-1} \Lambda_{1} \subseteq \pi^{-2}\Lambda_{g} \subseteq \dots \\
	 \dots \subseteq \pi \Lambda_{2} \subseteq &  \Lambda_{1} \subseteq  \pi^{-1} \Lambda_{2} \subseteq \pi^{-2}\Lambda_{1} \subseteq \dots \\
						   & \vdots \\
	 \dots \subseteq \pi \Lambda_{n-1} \subseteq &  \Lambda_{h} \subseteq  \pi^{-1} \Lambda_{n-1} \subseteq \pi^{-2}\Lambda_{h} \subseteq \dots \\
    \end{align}
\end{proposition}
\begin{proof}
We claim that we can find a basis of $v_{1},v_{2},v_{3} \in F_{\pi}^{3}$ such that $\Lambda_{g} = v_{1} \mathcal{O}_{\pi} + v_{2} \mathcal{O}_{\pi} + v_{3} \mathcal{O}_{\pi}$ and $\Lambda_{h} = \chi^{n_{1}} v_{1} \mathcal{O}_{\pi} +  \chi^{n_{2}}v_{2} \mathcal{O}_{\pi} + \chi^{n_{3}} v_{3} \mathcal{O}_{\pi}$ for some $n_{1} \geq n_{2} \geq n_{3} \in \mathbb{Z}$. 

    Here is how one can get this claim. 
    Observe that $g^{-1} h \in \GL_{3}(F_{\pi})$ admits a Cartan decomposition.
    Hence for some diagonal $a= \diag( \chi ^{n_{1}}, \chi^{n_{2}}, \chi^{n_{3}})$ and $k_{1},k_{2} \in \GL_{3}(\mathcal{O}_{\pi})$, one has $g^{-1} h = k_{1} a k_{2} \implies  (g k_{1})^{-1} h = a k_{2}$. 
    Note that $g \mathcal{O}_{\pi}^{3} = (gk_{1}) \mathcal{O}_{\pi}^{3}$ because $k_{1} \in \GL_{3}(\mathcal{O}_{\pi})$.
    We observe that the columns $v_{1},v_{2},v_{3}$ of $gk_{1} \in \GL_{3}(F_{\pi})$ are the required basis. Indeed, $\Lambda_{h} = gk_{1}(gk_{1})^{-1} h \mathcal{O}_{\pi}^{3} = (gk_{1}) a k_{2} \mathcal{O}_{\pi}^{3} = (gk_{1}) a \mathcal{O}_{\pi}^{3}$ which is exactly what we need.

    Denote $g_{1} = g k_{1}$.
    Because of self-duality of $\Lambda_{g},\Lambda_{h}$, 
    we also know that $(g_{1}^{*})^{-1} = g_{1} l_1$ and $(h^{-1})^{*} = h l_2$ for some $l_1,l_2 \in \GL_{3}(\mathcal{O}_{\pi})$. 
    This implies that $((g_{1}^{-1} h)^{*})^{-1} = l_{1} ((k_{1} a k_{2})^{*})^{-1} l_{2}$. By the uniqueness of the Cartan decomposition, we must therefore have $n_{1}=-n_{3}$ and $n_{2}=0$. 
    Because $ \pi^{n} \Lambda_{h} \subseteq \Lambda_{g} \subseteq \pi^{-n} \Lambda_{h}$, we have $ n \geq n_{1} \geq n_{2} \geq n_{3} \geq -n$. By the minimality of $n$, we must have $(n_{1},n_{2},n_{3})=(n,0,-n)$.

    Consider the Gram matrix $A_{g} = [\langle v_{i}, v_{j}\rangle]_{1 \leq i,j \leq 3}$ of $\Lambda_{g}$. Because $\Lambda_{g}$ is self-dual, we must have that $A_{g} \in \GL_{3}(\mathcal{O}_{\pi})$ by Lemma \ref{le:gramm_matrix}.
    The Gram matrix $A_{h}$ of $\Lambda_{h}$ must then be 
    \begin{equation}
        \begin{pmatrix}
	    (\chi \overline{\chi})^{n}\langle v_{1} , v_{1}\rangle &  \chi^{n}\langle v_{1}, v_{2}\rangle & \langle v_{1} ,v_{3}\rangle \\
	    \overline{\chi}^{n}\langle v_{2} , v_{1}\rangle & \langle v_{2}, v_{2}\rangle & \chi^{-n }\langle v_{2} ,v_{3}\rangle \\
	    \langle v_{3} , v_{1}\rangle &  \overline{\chi}^{-n}\langle v_{3}, v_{2}\rangle & (\chi \overline{\chi})^{-n} \langle v_{3} ,v_{3}\rangle \\
        \end{pmatrix} \in \GL_{3}(\mathcal{O}_{\pi}).
    \end{equation}
    This means that $\langle v_{2}, v_{3} \rangle \in \pi^{n} \mathcal{O}_{\pi}$ and $\langle v_{3}, v_{3}\rangle \in \pi^{2n} \mathcal{O}_{\pi}$ (the $(2,3)$ entry carries the factor $\chi^{-n}$ of $\pi$-valuation $-n$, while the $(3,3)$ entry carries $(\chi\overline{\chi})^{-n}=3^{-n}$ of $\pi$-valuation $-2n$). To ensure that $\det A_{g} \in \mathcal{O}_{\pi}^{\times}$, we must have $\langle v_{1},v_{3} \rangle , \langle v_{2},v_{2}\rangle \in \mathcal{O}_{\pi}^{\times}$. One can see this via projection of $A_{g}$ under the map $\GL_{3}(\mathcal{O}_{\pi})\rightarrow  \GL_{3}(\mathbb{F}_{3})$.

    It is clear that the lattices $L_{i} = \chi^{i} v_{1} \mathcal{O}_{\pi} + v_{2} \mathcal{O}_{\pi} + \chi^{-i} v_{3} \mathcal{O}_{\pi}$ for $i=1,\dots,n-1$ satisfy Equation \ref{eq:non-lattice-chain2-repeat}. The only thing left to verify is that they are self-dual. For this, by Lemma \ref{le:gramm_matrix}, it is sufficient to argue that the Gram matrices of $L_{i}$ satisfy
    \begin{equation}
        \begin{pmatrix}
	    (\chi \overline{\chi})^{i}\langle v_{1} , v_{1}\rangle &  \chi^{i}\langle v_{1}, v_{2}\rangle & \langle v_{1} ,v_{3}\rangle \\
	    \overline{\chi}^{i}\langle v_{2} , v_{1}\rangle & \langle v_{2}, v_{2}\rangle & \chi^{-i }\langle v_{2} ,v_{3}\rangle \\
	    \langle v_{3} , v_{1}\rangle &  \overline{\chi}^{-i}\langle v_{3}, v_{2}\rangle & (\chi \overline{\chi})^{-i} \langle v_{3} ,v_{3}\rangle \\
        \end{pmatrix} \in \GL_{3}(\mathcal{O}_{\pi}).
    \end{equation}
\end{proof}
\section{Proof of Lemma \ref{le:ispositive}}
\label{proofoflemma5.3}
\begin{lemma}
  \label{le:ispositive_repeat}
  Recall $\mathcal{A}$ from Lemma \ref{le:Aset}.
  \begin{enumerate}

  \item Given any $k, k' \in \GL_{3}(\mathcal{O}_{\pi})$, we have $l(kgk') = l(g)$.
      \item One has $l(g_{1}g_{2}) \leq l(g_{1}) + l(g_{2})$ for any $g_{1}, g_{2} \in \GL_{3}(F_{\pi})$.
	  \item For $g \in \mathcal{A}$, $l(g) \geq 0$ and $l(g^{-1}) = l(g)$.
	  \item For $g,h \in \mathcal{A}$, one has $l(g^{-1} h) \geq 0$. 
	      \item 
	      The map $\tilde{d}(g,h) :=  \frac{1}{2}[l(g^{-1}h) + l(h^{-1} g)]$ defines a metric on $  \mathcal{A} / \GL_{3}(\mathcal{O}_{\pi})$.
  \end{enumerate}
\end{lemma}

\begin{proof}
    \begin{enumerate}

    \item 
    For this statement, observe that for any $a_{1}, a_{2}, a_{3} \in \mathcal{O}_{\pi}$ and $b_{1},b_{2},b_{3} \in F_{\pi} $, one has 
    \begin{equation}
	\min_{i=1,2,3} v_{\pi}(b_{i}) \leq v_{\pi}(a_{1} b_{1} + a_{2} b_{2} + a_{3} b_{3}).
    \end{equation}
    This proves that $l(k g k') \leq l(g)$ for each $k,k' \in \U_{3}(\mathcal{O}_{\pi})$. Replacing $g$ with $k g k'$ and $k,k'$ with their respective inverses gives us the equality.

    \item 
    It is known by Cartan decomposition that any $g \in \GL_{3}(F_{\pi})$ can be written as 
    \begin{equation}
	g = k a k',
    \end{equation}
    where $k,k' \in \GL_{3}(\mathcal{O}_{\pi})$ and $a \in \GL_{3}(F_{\pi})$ is a diagonal matrix with entries that are powers of $\pi$. 
    Writing $g_{1} = k_{1} a_{1} k_{1}'$ and $g_{2} = k_{2} a_{2} k_{2}'$ in their respective Cartan decompositions for $k_{1},k_{2},k_{1}',k_{2}'\in \GL_{3}(\mathcal{O}_{\pi})$ and diagonal $a_{1},a_{2} \in \GL_{3}(F_{\pi})$ gives us that it is sufficient to show
    \begin{equation}
        l(a_{1} k_{1}' k_{2} a_{2}) \leq l(a_{1}) + l(a_{2}).
    \end{equation}
    Note that $l(a_{2}) = l(k_{1}' k_{2} a_{2})$. So the inequality follows from the claim that for any $g \in \GL_{3}(F_{\pi})$ and any diagonal $a \in \GL_{3}(F_{\pi})$, one has 
    \begin{equation}
        l(a g) \leq l(a) + l(g).
    \end{equation}
    This is easy to verify from Equation \ref{eq:defi_of_l}.
        \item 

    For $g \in \mathcal{A}$ we have $g^{*}g \in \GL_{3}(\mathcal{O}_{\pi})$, so $\det(g^{*}g) \in \mathcal{O}_{\pi}^{\times}$ and $v_{\pi}(\det g) = \tfrac{1}{2}v_{\pi}(\det(g^{*}g)) = 0$. If every entry satisfied $v_{\pi}(g_{ij}) \geq 1$, then $g \in \chi\,\M_{3}(\mathcal{O}_{\pi})$ and $v_{\pi}(\det g) \geq 3$, a contradiction. Hence $\min_{i,j} v_{\pi}(g_{ij}) \leq 0$, so $l(g) = -2\min_{i,j} v_{\pi}(g_{ij}) \geq 0$.

    The statement $l(g) = l(g^{-1})$ is a simple consequence of Lemma \ref{le:diagonals}.

    \item 
	We get that $l(h) \leq l(g) + l(g^{-1}h)$ and $l(g) = l (g^{-1}) \leq l(g^{-1}h) + l(h^{-1}) $. In particular, $l(g^{-1} h) \geq \max\{ l(g)-l(h), l(h)-l(g)\} \geq 0 $.
\item One knows that $\tilde{d} \geq 0$, is symmetric and satisfies the triangle inequality. The only thing to see is what happens if $\tilde{d}(g,h) =0$ for some $g,h \in \mathcal{A}$. In this case, we know that $l(g^{-1} h) = 0 = l(h^{-1}g)$.
    
    If the Cartan decomposition of $g^{-1} h$ has the diagonal $a = \diag(\chi^{n_{1}},\chi^{n_{2}},\chi^{n_{3}})$ with $n_{1} \geq n_{2} \geq n_{3}$, then $l(g^{-1}h), l(h^{-1}g)$ force that $0 \geq n_{1} \geq n_{2} \geq n_{3} \geq 0$.
    This means that $g \in h \GL_{3}(\mathcal{O}_{\pi})$ and we are done.

    \end{enumerate}
\end{proof}
\section{Proof of Lemma \ref{le:hecke_nghbs}}
\label{proofoflemma5.4}
\begin{lemma}
  Let $\mathcal{A}$ be as in Lemma \ref{le:Aset} and let $g,h \in \mathcal{A}$.
  Let $\Lambda_{g} = g \mathcal{O}_{\pi}^{3}$ and $\Lambda_{h} = h \mathcal{O}_{\pi}^{3}$ and $\Lambda_{h} \neq \Lambda_{g}$. 
  Thus $\Lambda_{g},\Lambda_{h}$ are distinct self-dual lattices in $F_{\pi}^{3}$.
  Then $\tilde{d}(g,h) = 2$ if and only if $ \Lambda_{g }\subseteq  \pi^{-1}\Lambda_{h}$ and $ \Lambda_{h} \subseteq \pi^{-1} \Lambda_{g}$.
\end{lemma}
\begin{proof}
    We know that $\tilde{d}(g,h)=2$ is possible if either $l(g^{-1}h) =2, l(h^{-1}g)=2$ or $l(g^{-1}h)=4,l(h^{-1}g) = 0$.

    Let us show that the latter cannot happen. If $l(h^{-1} g )=0$ then we must have $l(g)=l(h)$ since $ 0 = l(h^{-1}g) \geq \max\{ l(g) - l(h) , l(h)-l(g) \} \geq 0$. 
   Suppose $2n=l(g)$. 
   This means that for some $k_{1},k_{2},k_{1}',k_{2}' \in \GL_{3}(\mathcal{O}_{\pi})$ and $d = \diag(\chi^{n}, 1, \chi^{-n})$, by Lemma \ref{le:diagonals} we get 
   \begin{equation}
       g =  k_{1} d k_{1}', h = k_{2} d k_{2}'.
       \label{eq:cartan_gh}
   \end{equation}
   Denote $x = k_{1}^{-1} k_{2} \in \GL_{3}(\mathcal{O}_{\pi})$. Then, because of Lemma \ref{le:ispositive} we get that $l(g^{-1}h ) = l(d^{-1} x d) = 4$ and $l(h^{-1} g) = l(d^{-1} x^{-1} d) = l \left( (d^{-1} x d)^{-1}\right) = 0$. The last equality means $$d^{-1} x^{-1} d \in \left( d^{-1}\GL_{3}(\mathcal{O}_{\pi}) d \right) \cap \GL_{3}(\mathcal{O}_{\pi}).$$
   The right side is a congruence subgroup in $\GL_{3}(\mathcal{O}_{\pi})$ and therefore is preserved under taking inverses. This implies that $d^{-1} x d \in \GL_{3}(\mathcal{O}_{\pi})$ and therefore $l(d^{-1} x d) = 0 \neq 4$.

   Therefore, we conclude that $l(g^{-1}h) = l(h^{-1} g ) =2$. Let $ g^{-1} h= kd k'$ be the Cartan decomposition with $k,k' \in \GL_{3}(\mathcal{O}_{\pi})$ and the diagonal $d = \diag(\chi^{n_{1}}, \chi^{n_{2}}, \chi^{n_{3}})$. Because $l(g^{-1}h)=l(h^{-1}g)=2$, we know that $1 = n_{1} \geq n_{2} \geq n_{3} = -1$. This means that $\chi d$ and $\chi d^{-1}$ has all its entries in $\mathcal{O}_{\pi}$.
   This lets us conclude that $ g^{-1}  h \mathcal{O}_{\pi}^{3} = k d \mathcal{O}_{\pi}^{3} \supset k \pi \mathcal{O}_{\pi}^{3} = \pi \mathcal{O}_{\pi}^{3}$ and 
   therefore $ \Lambda_{g} \subseteq  \pi^{-1} \Lambda_{h}$. 
   Similarly, using $d^{-1}$ we conclude that $\Lambda_{h} \subseteq \pi^{-1} \Lambda_{g}$. 

   To show that $\Lambda_{h} \subseteq \pi^{-1} \Lambda_{g}$ and $\Lambda_{g} \subseteq \pi^{-1} \Lambda_{h}$ imply that $\tilde{d}(g,h) =2$, one can show that $l(g^{-1}h), l(h^{-1}g) = 2$. This follows in the same way as the above paragraph. 

\end{proof}
\section{Proof of Proposition \ref{prop:distance}}
\label{proofoflemma5.5}
\begin{proposition}
    Consider $\mathcal{B}$ as a graph and let $d(x,y)$ denote the edge-distance between two points $x,y \in \mathcal{B}$  
    Let $g,h \in \mathcal{A} \subseteq \GL_{3}(F_{\pi})$. 
    Then, one has $\tilde{d}(g,h) = d(g v_{0},hv_{0})$.
\end{proposition}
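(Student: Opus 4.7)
The plan is to prove both inequalities $d(gv_0, hv_0) \leq \tilde{d}(g,h)$ and $\tilde{d}(g,h) \leq d(gv_0, hv_0)$ separately, observing first that both quantities are $\GL_3(\mathcal{O}_\pi)$-invariant on the right (the stabilizer of $v_0$ acts trivially on both sides), so we may reduce to the case $g = I$. By Lemma \ref{le:diagonals}, the Cartan decomposition of $h \in \mathcal{A}$ has diagonal part $\diag(\chi^n, 1, \chi^{-n})$ for some $n \geq 0$, and hence $\tilde{d}(I, h) = \tfrac{1}{2}(l(h) + l(h^{-1})) = 2n$. On the graph side, $\mathcal{B}$ is bipartite with parts $P_\mathcal{B}$ and $A_\mathcal{B}$, so any path between two pure vertices has even length; hence both sides are even non-negative integers.

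For the direction $d(v_0, hv_0) \leq \tilde{d}(I, h)$, the Cartan decomposition gives $\pi^n \Lambda_I \subseteq \Lambda_h \subseteq \pi^{-n}\Lambda_I$ with $n$ minimal (the minimality follows from uniqueness in the Cartan decomposition). Proposition \ref{pr:chain} then supplies self-dual lattices $\Lambda_I = L_0, L_1, \dots, L_{n-1}, L_n = \Lambda_h$ with each consecutive pair satisfying $\pi L_{i+1} \subseteq L_i \subseteq \pi^{-1} L_{i+1}$, i.e.\ $L_i \subseteq \pi^{-1}L_{i+1}$ and $L_{i+1} \subseteq \pi^{-1} L_i$. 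By Lemma \ref{le:hecke_nghbs}, each pair of corresponding pure vertices in $\mathcal{B}$ lies at graph-distance $2$, so concatenating yields a path of total length $2n$ from $v_0$ to $hv_0$.

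For the converse direction, let $v_0 = x_0, x_1, \dots, x_m = hv_0$ be a minimal path in $\mathcal{B}$ realizing the graph distance; by bipartiteness $m = 2n'$ is even, and the pure vertices occurring along the path are $x_0, x_2, \dots, x_{2n'}$, each corresponding to some self-dual lattice $L^{(i)}$ and a representative $h_i \in \mathcal{A}$. Consecutive pure vertices $x_{2i}$ and $x_{2i+2}$ share the alternating vertex $x_{2i+1}$; by Proposition \ref{pr:simplex_classification}, the $1$-simplex through $x_{2i+1}$ sandwiches both $L^{(i)}$ and $L^{(i+1)}$ between $\dual{\Lambda_1}$ and $\Lambda_1$, forcing $L^{(i)} \subseteq \pi^{-1} L^{(i+1)}$ and $L^{(i+1)} \subseteq \pi^{-1} L^{(i)}$. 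Lemma \ref{le:hecke_nghbs} then gives $\tilde{d}(h_i, h_{i+1}) = 2$, and the triangle inequality for $\tilde{d}$ (Lemma \ref{le:ispositive}) yields $\tilde{d}(I, h) \leq 2n' = m$.

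The main technical obstacle is ensuring the extraction of a clean chain from the minimality of $n$ in Proposition \ref{pr:chain} in the forward direction; the converse is essentially combinatorial once one reads off the lattice inclusions forced by a shared alternating vertex. A minor subtlety is that $\mathcal{A}$ is not closed under $\pi$-scaling (since $\pi \overline{\pi} = 3 \notin \mathcal{O}_\pi^\times$), so the bijection between pure vertices and classes in $\mathcal{A}/\GL_3(\mathcal{O}_\pi)$ must be invoked to match up the two metrics; this follows from Lemma \ref{le:Aset}.
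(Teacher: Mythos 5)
Your proof is correct, and it in fact repairs an apparent gap in the paper's own argument. The paper reduces the proposition to the claim ``$\tilde{d}(g,h)=2 \iff d(gv_0,hv_0)=2$'' and then derives $\tilde{d}(g,h)\leq d(gv_0,hv_0)$ from the triangle inequality; but the subsequent ``On the other hand'' paragraph again derives $\tilde d \leq d$ (from a graph path one extracts lattice inclusions, hence bounds on $l$), so the paper never cleanly writes out $d(gv_0,hv_0)\leq\tilde d(g,h)$. You supply it correctly, and with exactly the right tool (\cref{pr:chain}): from the Cartan decomposition you read off the minimal $n$ with $\pi^n\Lambda_g\subseteq\Lambda_h\subseteq\pi^{-n}\Lambda_g$, invoke \cref{pr:chain} to interpolate $n-1$ self-dual lattices, and turn each consecutive pair into an edge of length $2$ via \cref{le:hecke_nghbs}. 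Your converse direction (shortest path $\Rightarrow$ extract sandwich inclusions from shared alternating vertices $\Rightarrow$ \cref{le:hecke_nghbs} $\Rightarrow$ triangle inequality) is the same argument the paper gestures at, stated with more care.

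One small flaw in the setup: your stated justification for reducing to $g=I$ does not work as written. Right $\GL_3(\mathcal{O}_\pi)$-invariance lets you replace $g$ by $gk$ with $k\in\GL_3(\mathcal{O}_\pi)$, but it does not allow you to set $g=I$ unless $g$ already lies in $\GL_3(\mathcal{O}_\pi)$, and a general element of $\mathcal{A}$ does not. Moreover, $g^{-1}$ need not act on $\mathcal{B}$ since $\mathcal{A}$ is not a group and non-unitary elements of $\mathcal{A}$ do not preserve self-dual lattices. This is harmless because the reduction is avoidable: simply apply \cref{le:diagonals} and the Cartan-decomposition argument of \cref{pr:chain} to $g^{-1}h$ (both $\Lambda_g$ and $\Lambda_h$ are self-dual, which is all the argument needs, and $\tilde d(g,h)=\tfrac12[l(g^{-1}h)+l(h^{-1}g)]$ already depends only on $g^{-1}h$). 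Alternatively, one may reduce to $g=I$ correctly by first translating by a unitary $u\in\U_3(F_\pi)$ with $ug\in\GL_3(\mathcal{O}_\pi)$; note $uh\in\mathcal{A}$ since $u^*u=I$, and $l(uh)=l(g^{-1}h)$ because $ug\in\GL_3(\mathcal{O}_\pi)$. With that corrected, your proof is complete.
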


\begin{proof}
    
We know from Lemma \ref{le:Aset} that $\mathcal{A} v_{0} = \mathcal{P}_{\mathcal{B}}$.

To show that $\tilde{d}(g,h) = d(gv_{0}, hv_{0})$, it is sufficient to show that for any $g,h \in \mathcal{A}$
\begin{equation}
\tilde{d}(g,h) = 2 \text{  if and only if } d(g v_{0} , h v_{0}) = 2.
\end{equation}
Indeed, here's how this claim proves the result. For general $g,h \in \mathcal{A}$ such that $d(g v_{0} , h v_{0}) < \infty$, we obtain from the claim that $\tilde{d}(g,h)$ is at most 
\begin{align}
    2  & \min\{ n \mid (g_{0},\dots,g_{n}) \in \mathcal{A}^{n+1} ,
	g_{0}=g, g_{n} = h, 
    d(g_{i} v_{0},g_{i+1} v_{0}) = 2 \text{ for } i = 0,\dots,n-1
    \},
\end{align}
hence we get $\tilde{d}(g,h) \leq d(g v_{0}, h v_{0})$. 

On the other hand, if $d(g v_{0},h v_{0}) = 2n$ then one shows from constructing a chain of $1$-simplices that 
$ \Lambda_{g}  = g \mathcal{O}_{\pi}^{3}$ and $\Lambda_{h} = h \mathcal{O}_{\pi}^{3}$ satisfy $ \pi^{n} \Lambda_{h} \subseteq \Lambda_{g} \subseteq \pi^{-n}\Lambda_{h}$. This implies that $g^{-1} h \mathcal{O}_{\pi}^{3} \subseteq \pi^{-n} \mathcal{O}_{\pi}^{3}$ and therefore $l(g^{-1}h) \leq 2n$. Similarly we conclude the same for $l(h^{-1}g)$ and therefore obtain $\tilde{d}(g,h) \leq 2n$.

From the above discussion, it is immediately clear that $d(gv_{0},hv_{0}) =2$ implies that $\tilde{d}(g,h) = 2$. For the converse, we observe from Lemma \ref{le:hecke_nghbs} that if $\tilde{d}(g,h)=2$ then $ \pi \Lambda_{h} \subset \Lambda_{g} \subset \pi^{-1} \Lambda_{h}$. We need to show that there are two 1-simplices connecting $\{\pi^{i}\Lambda_{h}\}_{i \in \mathbb{Z}}$ and $\{\pi^{i} \Lambda_{g}\}_{i \in \mathbb{Z}}$. For this, we must show that there exists a lattice $\Lambda \subseteq F_{\pi}^{3}$ such that 
\begin{align}
 \pi \Lambda_{g}\subseteq \dual{\Lambda} \subseteq  \Lambda_{g} \subseteq \Lambda  \subseteq \pi^{-1} \Lambda_{g} \text{ and }
 \pi \Lambda_{h}\subseteq \dual{\Lambda} \subseteq  \Lambda_{h} \subseteq \Lambda  \subseteq \pi^{-1} \Lambda_{h}.
\end{align}
We observe that $\Lambda = \Lambda_{g}+ \Lambda_{h}$ or $\dual{(\Lambda_h \cap \Lambda_g)}$ works. One checks that 
\begin{equation}
\pi \Lambda_h,\pi \Lambda_g. \subseteq 
    \Lambda_{h} \cap \Lambda_{g}\subseteq \dual{(\Lambda_{g} + \Lambda_h)}  \subseteq \Lambda_{h},\Lambda_{g}\subseteq  {\Lambda_{g}+ \Lambda_{h}}  \subseteq 
    \dual{(\Lambda_{g} \cap \Lambda_h)} \subseteq \pi^{-1} \Lambda_h, \pi^{-1} \Lambda_g.
\end{equation} 
\end{proof}

\end{document}